\newcommand{\lb}{\label}
\newcommand{\bea}{\begin{eqnarray}}
\newcommand{\eea}{\end{eqnarray}}
\renewcommand{\tocsubsection}[3]{%
  \indentlabel{\@ifnotempty{#2}{\hspace*{2.3em}\makebox[2.3em][l]{%
    \ignorespaces#1 #2.\hfill}}}#3}
\renewcommand{\tocsubsubsection}[3]{%
  \indentlabel{\@ifnotempty{#2}{\hspace*{4.6em}\makebox[3em][l]{%
    \ignorespaces#1 #2.\hfill}}}#3}
\newcounter{mnotecount}[section]
\renewcommand{\themnotecount}{\thesection.\arabic{mnotecount}}
\newcommand{\mnote}[1]%{}
{\protect{\stepcounter{mnotecount}}$^{\mbox{\footnotesize $%
\!\!\!\!\!\!\,\bullet$\themnotecount}}$ \marginpar{%\color{red}
\raggedright\tiny\em $\!\!\!\!\!\!\,\bullet$\themnotecount: #1} }
\theoremstyle{plain}
\newtheorem{theorem}{Theorem}[section]
\newtheorem{proposition}[theorem]{Proposition}
\newtheorem{lemma}[theorem]{Lemma}
\theoremstyle{definition}
\newtheorem{remark}[theorem]{Remark}
\numberwithin{equation}{section}
\begin{document}

\begin{center}

\title[]{The wave equation near flat Friedmann-Lema\^itre-Robertson-Walker and Kasner Big Bang singularities}

\author{}

\address{}

\email{}

\date{}

\parskip = 0 pt

\maketitle

\bigskip
Artur Alho\footnote{e-mail address: aalho@math.ist.utl.pt}{${}^{,\star}$},
Grigorios Fournodavlos\footnote{e-mail address: gf313@cam.ac.uk}{${}^{,\dagger}$}, and
Anne T. Franzen\footnote{e-mail address: anne.franzen@tecnico.ulisboa.pt}{${}^{,\star}$} 

\bigskip
{\it {${}^\star$}Center for Mathematical Analysis, Geometry and Dynamical Systems,}\\
{\it Instituto Superior T\'ecnico, Universidade de Lisboa,}\\
{\it Av. Rovisco Pais, 1049-001 Lisboa, Portugal}

\bigskip
{\it {${}^\dagger$} Department of Pure Mathematics and Mathematical Statistics,\\
University of Cambridge,\\
Wilberforce Road,
Cambridge
CB3 0WB, United Kingdom}

\end{center}

\begin{abstract}
We consider the wave equation, $\square_g\psi=0$, in fixed flat Friedmann-Lema\^itre-Robertson-Walker 
and Kasner spacetimes with topology $\mathbb{R}_+\times\mathbb{T}^3$. We obtain generic blow up results for solutions to the wave equation towards the Big Bang singularity in both backgrounds. In particular, we characterize open sets of initial data prescribed at a spacelike hypersurface close to the singularity, which give rise to solutions that blow up in an open set of the Big Bang hypersurface $\{t=0\}$. The initial data sets are characterized by the condition that the Neumann data should dominate, in an appropriate $L^2$-sense, up to two spatial derivatives of the Dirichlet data. For these initial configurations, the $L^2(\mathbb{T}^3)$ norms of the solutions blow up towards the Big Bang hypersurfaces of FLRW and Kasner with inverse polynomial and logarithmic rates respectively. Our method is based on deriving suitably weighted energy estimates in physical space. No symmetries of solutions are assumed.
\end{abstract}

\section{Introduction and main theorems}

In this note, we analyse the behaviour of solutions to the wave equation on cosmological backgrounds towards the initial singularity. 
%This can be seen as the first step towards understanding linear and non-linear cosmological singularity formation. 
Our spacetimes of interest are the spatially homogenous, isotropic flat  Friedmann-Lema\^itre-Robertson-Walker (FLRW hereafter) backgrounds and the anisotropic vacuum Kasner spacetimes. The former plays an important role in physics, since observational evidence 
suggests that at sufficiently large scales the universe seems to be spatially homogeneous and isotropic. 
The Kasner solutions also play an important role in the theory of general relativity, since they form the past attractor of Bianchi type I 
spacetimes, the {\it Kasner circle}, which in turn are the basic building blocks in the ``BKL conjecture"~\cite{BKL70} concerning spacelike cosmological singularities, see \cite{Brehm} for recent developments on this subject in the setting of spatially homogeneous solutions to the Einstein-vacuum equations. 

The spacetimes have the topology $\mathbb{R}_+\times\mathbb{T}^3$ and are endowed with the metrics:
\begin{align}
\label{FLRWmetric}g_{\text{FLRW}}=&-dt^2+t^{\frac{4}{3\gamma}}(dx_1^2+dx_2^2+dx^2_3),\qquad\frac{2}{3}<\gamma<2,\\
\label{Kasnermetric}g_{\text{Kasner}}=&-dt^2+\sum^{3}_{j=1}t^{2p_j}dx_j^2,\qquad\sum^{3}_{j=1}p_j = 1,\;\;  \sum^{3}_{j=1}p^2_j = 1,\;\; p_j<1,
\end{align}
respectively. Both metrics $g_{\mathrm{FLRW}},g_{\mathrm{Kasner}}$ have a Big Bang singularity at $t=0$, where the curvature blows up $|\text{Riem}|\sim t^{-2}$, as $t\rightarrow0$. 
Metrics of the form (\ref{FLRWmetric}) are solutions of the Einstein-Euler system for ideal fluids with linear equation of state $p=(\gamma-1)\rho$, 
where $p$ is the pressure, and $\rho$ the energy density. The case $\gamma=2$ corresponds to stiff fluids, i.e., $p=\rho$, 
where incompressibility is expressed by the velocity of sound $c_s$ equating the velocity of light $c=1$. 
For the stiff case the dynamics of the Einstein equations towards the singularity are 
completely understood by the work of \cite{RS1,RS2,Speck}. The other endpoint $\gamma=\frac{2}{3}$ corresponds to the coasting universe which does not have a spacelike singularity.
On the other hand, the Kasner metric (\ref{Kasnermetric}) is a solution to the Einstein vacuum equations. When one of the $p_j=1$ equals one, and the other two vanish (flat Kasner), it corresponds to the Taub form of Minkowski space and there is no singularity as the spacetime is flat.

Our goal is to understand the behaviour of smooth solutions to the wave equation towards these singularities from the initial value problem point of view and by deriving appropriate energy estimates in physical space, which may also prove useful for dynamical studies. According to the references in the literature, such as \cite{AR,peterson,rendall,Rin17}, these waves are shown to blow up in certain cases. We wish to characterize open sets of initial data at a given time $t_0>0$ for which such blow up behaviour occurs at $t=0$. Denote the constant $t$ hypersurfaces by $\Sigma_t$.

First, we give the general asymptotic profile of all solutions:
\begin{theorem}\label{Thm:Asym}
Let $\psi$ be a smooth solution to the wave equation, $\square_g\psi=0$, for either of the metrics $g_{\mathrm{FLRW}}, g_{\mathrm{Kasner}}$, arising from initial data $(\psi_0,\partial_t\psi_0)$ on $\Sigma_{t_0}$. Then, $\psi$ can be written in the following form: 
\begin{align}
\label{psiasymFLRW}\psi_{\mathrm{FLRW}}(t,x)=A_{\mathrm{FLRW}}(x)t^{1-\frac{2}{\gamma}}+u_{\mathrm{FLRW}}(t,x),\\
\label{psiasymKasner}\psi_{\mathrm{Kasner}}(t,x)=A_{\mathrm{Kasner}}(x)\log t+u_{\mathrm{Kasner}}(t,x),
\end{align}
where $A(x),u(t,x)$ are smooth functions and $u_{FLRW}t^{\frac{2}{\gamma}-1},u_{Kasner}(\log t)^{-1}$ tend to zero, as $t\rightarrow0$.
\end{theorem}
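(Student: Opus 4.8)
The plan is to reduce $\square_g\psi=0$ to a divergence-form equation in $t$ with the spatial operator as a source term, and then to show that this source is integrable in time down to $t=0$. Computing the d'Alembertian for the two metrics (using $\sqrt{-g}=t^{2/\gamma}$ and $\sqrt{-g}=t$ respectively) recasts the wave equation as
\begin{equation*}
\partial_t\big(t^{2/\gamma}\partial_t\psi\big)=t^{2/(3\gamma)}\Delta\psi,
\qquad
\partial_t\big(t\,\partial_t\psi\big)=\sum_{j=1}^3 t^{1-2p_j}\partial_j^2\psi,
\end{equation*}
for FLRW and Kasner. Dropping the right-hand side — the velocity-term-dominated heuristic — and integrating twice already produces the claimed profiles: the homogeneous solutions of $\partial_t(t^{2/\gamma}\partial_t\psi)=0$ are spanned by $1$ and $t^{1-2/\gamma}$, and those of $\partial_t(t\,\partial_t\psi)=0$ by $1$ and $\log t$. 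Thus $A(x)$ should be a fixed multiple of the limit of the weighted momentum $t^{2/\gamma}\partial_t\psi$, resp. $t\,\partial_t\psi$, as $t\to0$, and the task is to justify that this limit exists and that the spatial term perturbs it only at lower order.

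The technical core is a single weighted energy estimate. For FLRW I would set
\begin{equation*}
\mathcal E(t)=\tfrac12\int_{\mathbb T^3}\Big(t^{2/\gamma}(\partial_t\psi)^2+t^{2/(3\gamma)}|\nabla\psi|^2\Big)\,dx,
\end{equation*}
multiply the equation by $\partial_t\psi$, integrate by parts in $x$, and verify that
\begin{equation*}
\frac{d}{dt}\big(t^{2/\gamma}\mathcal E\big)=\frac{4}{3\gamma}\,t^{2/\gamma-1}\int_{\mathbb T^3}t^{2/(3\gamma)}|\nabla\psi|^2\,dx\ \ge\ 0,
\end{equation*}
so that $t^{2/\gamma}\mathcal E(t)$ is monotone and $\mathcal E(t)\lesssim t^{-2/\gamma}$. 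The analogous Kasner energy $\tfrac12\int_{\mathbb T^3}\big(t(\partial_t\psi)^2+\sum_j t^{1-2p_j}(\partial_j\psi)^2\big)\,dx$ satisfies
\begin{equation*}
\frac{d}{dt}\big(tE\big)=\sum_{j=1}^3(1-p_j)\int_{\mathbb T^3}t^{1-2p_j}(\partial_j\psi)^2\,dx\ \ge\ 0,
\end{equation*}
where positivity is \emph{exactly} the condition $p_j<1$; hence $E(t)\lesssim t^{-1}$. Since all coefficients depend only on $t$, every spatial derivative $\partial^\alpha\psi$ solves the same equation, so the same estimates propagate to all orders, giving sharp control of $\|\psi(t)\|_{H^s}$ for each $s$.

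With these bounds I would integrate twice. Commuting derivatives, the sources obey $\|t^{2/(3\gamma)}\Delta\psi\|_{H^s}\lesssim t^{-2/(3\gamma)}$ and $\|\sum_j t^{1-2p_j}\partial_j^2\psi\|_{H^s}\lesssim t^{-p_\star}$ with $p_\star=\max_j p_j$, which are integrable on $(0,t_0]$ precisely because $\gamma>2/3$ and $p_\star<1$. Therefore the weighted momenta converge in every $H^s$ to smooth limits $B(x)$, and integrating $\partial_t\psi=t^{-2/\gamma}B$, resp. $t^{-1}B$, once more gives
\begin{equation*}
\psi=\tfrac{1}{1-2/\gamma}B(x)\,t^{1-2/\gamma}+u,
\qquad
\psi=B(x)\log t+u,
\end{equation*}
so $A$ is a fixed multiple of $B$. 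The remainder $u$ collects the constant of integration and the contribution of $B(t,\cdot)-B$; the a priori rates give $|B(t,\cdot)-B|\lesssim t^{1-2/(3\gamma)}$, resp. $\lesssim t^{1-p_\star}$, and feeding this back shows $u\,t^{2/\gamma-1}\to0$ and $u(\log t)^{-1}\to0$, which is the assertion; smoothness of $A,u$ follows from the $H^s$ convergence for all $s$.

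The main obstacle is obtaining the \emph{sharp} rate — the monotonicity of the weighted energy — rather than a crude polynomial bound, since a naive Gr\"onwall estimate loses powers of $t$ and is then too weak to make the spatial source integrable in the near-flat Kasner directions $p_j\uparrow 1$ (equivalently $\gamma\downarrow 2/3$ for FLRW). The precise choice of weight $t^{2/\gamma}$, resp. $t$, is exactly what turns the dangerous terms in $\tfrac{d}{dt}(t^\sigma\mathcal E)$ into the nonnegative combination $\sum_j(1-p_j)(\cdots)$; establishing this positivity and propagating it to all orders after commuting with $\partial^\alpha$ is the crux. A secondary point is the careful bookkeeping of the borderline exponents, so that the integrability thresholds $\gamma>2/3$ and $p_j<1$ emerge cleanly, consistent with the exclusion of the coasting and flat-Kasner endpoints.
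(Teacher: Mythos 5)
Your proposal is correct, and it reaches the theorem by a route that coincides with the paper's for Kasner but genuinely differs for FLRW. Your two weighted energies and their monotonicity are exactly the paper's estimates \eqref{psienestFLRW} and \eqref{psienestKasner} (Propositions \ref{prop:upperestFLRW}, \ref{prop:upperestKasner}); the paper derives them with the multipliers $t^{\frac{2}{\gamma}}e_0$ and $te_0$ in the vector-field formalism, which is the same computation as your multiplication by $\partial_t\psi$ and integration by parts. For Kasner the paper then argues exactly as you do, integrating $\partial_t(t\partial_t\psi)=\sum_i t^{1-2p_i}\partial_{x_i}^2\psi$ (equation \eqref{boxpsiexp}) twice in time; the only cosmetic difference is that the paper bounds the source pointwise via $H^2\hookrightarrow L^\infty$, obtaining $O(t^{1-2p_i}|\log t|)$, while you stay in $H^s$ and get $O(t^{-p_i})$ --- both integrable precisely when $p_i<1$. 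For FLRW, however, the paper does not integrate the ODE: it renormalizes to $\psi/t^{1-\frac{2}{\gamma}}$, which satisfies a damped wave equation, and proves monotonicity of weighted energies for it with exponents chosen case-by-case, $t^{4-\frac{6}{\gamma}}$ for $\frac{4}{3}\le\gamma<2$ and $t^{-\frac{2}{3\gamma}}$ for $\frac{2}{3}<\gamma\le\frac{4}{3}$ (Proposition \ref{prop:renestFLRW}). Your ODE integration of $\partial_t(t^{2/\gamma}\partial_t\psi)=t^{2/(3\gamma)}\Delta\psi$ is more uniform --- one argument for both metrics, no case split in $\gamma$ --- whereas the paper's renormalized estimates buy stronger quantitative output: the energy convergence \eqref{diffenestFLRW} of the remainder, which is what is actually invoked later (through \eqref{Jlim}) in the proof of Theorem \ref{Thm:BlowUP}, plus the continuous dependence of $A$ on the data that the paper emphasizes after Theorem \ref{Thm:Asym}.

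One step you should make explicit: for $\frac{2}{3}<\gamma<\frac{4}{3}$ your remainder $u$ is itself unbounded, so the conclusion $u\,t^{\frac{2}{\gamma}-1}\to0$ is not automatic from boundedness. Writing $P(t)=t^{2/\gamma}\partial_t\psi$ and feeding the rate $\|P(t)-B\|_{H^s}\lesssim t^{1-\frac{2}{3\gamma}}$ into the final integration gives $\int_t^{t_0}s^{-2/\gamma}\|P(s)-B\|_{H^s}\,ds\lesssim\int_t^{t_0}s^{1-\frac{8}{3\gamma}}\,ds$, which diverges like $t^{2-\frac{8}{3\gamma}}$ when $\gamma<\frac{4}{3}$ (and like $\log(1/t)$ at $\gamma=\frac{4}{3}$); since $2-\frac{8}{3\gamma}>1-\frac{2}{\gamma}$ exactly when $\gamma>\frac{2}{3}$, this is still $o(t^{1-\frac{2}{\gamma}})$, so your assertion stands, but this short case analysis is needed to see it.
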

We prove the preceding theorem by deriving appropriate stability estimates for renormalized variables, which as a corollary imply the continuous dependence of $A(x)$ on initial data. For instance, solutions coming from initial configurations close to those of the homogeneous solutions $t^{1-\frac{2}{\gamma}},\log t$ in FLRW and Kasner respectively, will blow up with leading order coefficients $A(x)\sim1$. Hence, the set of all blowing up solutions to the wave equation is open and dense.\footnote{In the sense that if a solution blows up in a compact set at $\Sigma_0$, i.e., $A(x)\neq0$ in that compact set, then this property persists under sufficiently small perturbations. On the contrary, if $A(x)=0$ in an open subset of $\Sigma_0$, one can always add a small multiple of $t^{1-\frac{2}{\gamma}},\log t$ to produce a new solution with $A(x)\neq0$ in that open set.} 

Our next theorem yields a characterization of open sets of initial data for which the corresponding solutions to the wave equation blow up in $L^2(\mathbb{T}^3)$ at the Big Bang hypersurface $t=0$.
\begin{theorem}\label{Thm:BlowUP}
Let $\psi$ be a smooth solution to the wave equation, $\square_g\psi=0$, for either of the metrics $g_{\mathrm{FLRW}}, g_{\mathrm{Kasner}}$, arising from initial data $(\psi_0,\partial_t\psi_0)$ on $\Sigma_{t_0}$, $t_0>0$. If $\partial_t\psi_0$ is non-zero in $L^2(\mathbb{T}^3)$, $t_0$ is sufficiently small such that 
\begin{align}
\label{smallt0FLRW}\frac{2t_0^{2-\frac{4}{3\gamma}}}{1-(\frac{2}{3\gamma})^2}\sum_{i=1}^3\|\partial_t\partial_{x_i}\psi_0\|_{L^2(\mathbb{T}^3)}^2<&\,\epsilon\|\partial_t\psi_0\|_{L^2(\mathbb{T}^3)}^2, &\mathrm{(FLRW)}\\
\label{smallt0Kasner}\sum_{i=1}^3 \frac{2t^{2-2p_i}_0}{(1-p_i)^2}\|\partial_t\partial_{x_i}\psi_0\|^2_{L^2(\mathbb{T}^3)}<&\,\epsilon\| \partial_t\psi_0\|^2_{L^2(\mathbb{T}^3)},& (\mathrm{Kasner})
\end{align}
and $\psi_0,\partial_t\psi_0$ satisfy the open conditions
\begin{align}
\label{initcondFLRW}(1-\epsilon)\|\partial_t\psi_0 \|^2_{L^2(\mathbb{T}^3)} >&\, 
t^{-\frac{4}{3\gamma}}_0\sum^3_{i=3}\|\partial_{x_i}\psi_0\|^2_{L^2(\mathbb{T}^3)} +\frac{2t_0^{2-\frac{8}{3\gamma}}}{1-(\frac{2}{3\gamma})^2}\sum_{i,j=1}^3\|\partial_{x_j}\partial_{x_i}\psi_0\|_{L^2(\mathbb{T}^3)},&\mathrm{(FLRW)}\\
\label{initcondKasner} (1-\epsilon)\|\partial_t\psi_0\|^2_{L^2(\mathbb{T}^3)}>&\,
\sum^3_{i=3}t^{-2p_i}_0\|\partial_{x_i}\psi_0\|^2_{L^2(\mathbb{T}^3)} +\sum_{i,j=1}^3\frac{2t_0^{2-2p_i-2p_j}}{(1-p_i)^2}\|\partial_{x_j}\partial_{x_i}\psi_0\|_{L^2(\mathbb{T}^3)},&\mathrm{(Kasner)}
\end{align}
for some $0<\epsilon<1$, then $\|A(x)\|_{L^2(\mathbb{T}^3)}>0$.
\end{theorem}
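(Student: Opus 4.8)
The plan is to test the equation against the multiplier that renormalizes $\partial_t\psi$ into the coefficient $A$; I present the FLRW case, Kasner being identical after replacing $t^{1-2/\gamma}$ by $\log t$ and $\tfrac{2}{\gamma}$ by $1$. Writing $\square_{g_{\mathrm{FLRW}}}\psi=0$ in the form $\partial_t\big(t^{2/\gamma}\partial_t\psi\big)=t^{2/(3\gamma)}\Delta\psi$ and setting $P:=t^{2/\gamma}\partial_t\psi$, I multiply by $P$, integrate over $\mathbb{T}^3$ and integrate by parts in $x$ to get $\tfrac{d}{dt}\|P\|^2_{L^2}=-t^{8/(3\gamma)}\tfrac{d}{dt}\|\nabla\psi\|^2_{L^2}$. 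Hence $W(t):=\|P(t)\|^2_{L^2}+t^{8/(3\gamma)}\|\nabla\psi(t)\|^2_{L^2}$ is non-decreasing and bounded below, so $\lim_{t\to0}\|P(t)\|^2_{L^2}$ exists and, by Theorem \ref{Thm:Asym}, equals $(1-\tfrac{2}{\gamma})^2\|A\|^2_{L^2}$. Integrating the derivative identity in $t$ and integrating by parts once more --- the boundary term at $t=0$ vanishing since $t^{8/(3\gamma)}\|\nabla\psi\|^2_{L^2}\sim t^{2-4/(3\gamma)}\|\nabla A\|^2_{L^2}\to0$ --- yields the exact identity
\begin{equation*}
\Big(1-\tfrac{2}{\gamma}\Big)^2\|A\|^2_{L^2}=t_0^{4/\gamma}\|\partial_t\psi_0\|^2_{L^2}+t_0^{8/(3\gamma)}\|\nabla\psi_0\|^2_{L^2}-\tfrac{8}{3\gamma}\int_0^{t_0}t^{8/(3\gamma)-1}\|\nabla\psi(t)\|^2_{L^2}\,dt .
\end{equation*}
Thus $\|A\|_{L^2}>0$ reduces to showing that the spacetime integral is strictly dominated by the boundary data at $t_0$.

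The main obstacle is that $\|\nabla\psi(t)\|_{L^2}$ itself blows up as $t\to0$ (like $t^{1-2/\gamma}$), so it cannot be bounded by the data uniformly and must instead be estimated with its sharp $t$-weight. The resolution, and the reason the hypotheses involve two spatial derivatives, is to apply the very same identity to each $\partial_{x_i}\psi$, which also solves the wave equation. The corresponding first-order quantity $W_i(t):=\|t^{2/\gamma}\partial_t\partial_{x_i}\psi\|^2_{L^2}+t^{8/(3\gamma)}\|\nabla\partial_{x_i}\psi\|^2_{L^2}$ is again non-decreasing, hence $\|t^{2/\gamma}\partial_t\partial_{x_i}\psi(t)\|_{L^2}\le W_i(t_0)^{1/2}$, with $W_i(t_0)$ expressed purely through $\|\partial_t\partial_{x_i}\psi_0\|_{L^2}$ and $\|\partial_{x_j}\partial_{x_i}\psi_0\|_{L^2}$. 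Integrating $\partial_t\partial_{x_i}\psi=t^{-2/\gamma}\big(t^{2/\gamma}\partial_t\partial_{x_i}\psi\big)$ from $t$ to $t_0$ then gives the weighted bound
\begin{equation*}
\|\partial_{x_i}\psi(t)\|_{L^2}\le\|\partial_{x_i}\psi_0\|_{L^2}+\frac{t^{1-2/\gamma}}{\tfrac{2}{\gamma}-1}\,W_i(t_0)^{1/2}.
\end{equation*}

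Squaring, summing in $i$ and feeding this into $\int_0^{t_0}t^{8/(3\gamma)-1}\|\nabla\psi\|^2_{L^2}\,dt$, every remaining integral is an elementary power: $\int_0^{t_0}t^{8/(3\gamma)-1}\,dt$, $\int_0^{t_0}t^{1-4/(3\gamma)}\,dt$ and the cross term $\int_0^{t_0}t^{2/(3\gamma)}\,dt$, whose denominators produce the factors $1-\tfrac{2}{3\gamma}$ and $1+\tfrac{2}{3\gamma}$, i.e.\ the constant $1-(\tfrac{2}{3\gamma})^2$ appearing in \eqref{smallt0FLRW}--\eqref{initcondFLRW}. The $\|\partial_{x_i}\psi_0\|^2_{L^2}$ contribution of this integral precisely cancels the boundary term $t_0^{8/(3\gamma)}\|\nabla\psi_0\|^2_{L^2}$ of the identity; the cross term is split by Young's inequality, which is what generates the $t_0^{-4/(3\gamma)}\sum_i\|\partial_{x_i}\psi_0\|^2_{L^2}$ summand on the right of \eqref{initcondFLRW}; and the remaining second-order contribution separates into an $\|\partial_t\partial_{x_i}\psi_0\|^2_{L^2}$ piece, absorbed via \eqref{smallt0FLRW} into $\epsilon\|\partial_t\psi_0\|^2_{L^2}$, and a $\|\partial_{x_j}\partial_{x_i}\psi_0\|^2_{L^2}$ piece, controlled by \eqref{initcondFLRW} against $(1-\epsilon)\|\partial_t\psi_0\|^2_{L^2}$. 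Together with $\partial_t\psi_0\neq0$ in $L^2$, these give $\tfrac{8}{3\gamma}\int_0^{t_0}t^{8/(3\gamma)-1}\|\nabla\psi\|^2_{L^2}\,dt<t_0^{4/\gamma}\|\partial_t\psi_0\|^2_{L^2}+t_0^{8/(3\gamma)}\|\nabla\psi_0\|^2_{L^2}$, whence $\|A\|_{L^2}>0$.

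For Kasner the argument is verbatim with $P=t\partial_t\psi$, $\lim_{t\to0}t\partial_t\psi=A$, the direction-dependent weights $t^{2-2p_i}$, and the logarithmic rate $\|\partial_{x_i}\psi(t)\|_{L^2}\lesssim\log(t_0/t)$; the elementary integrals $\int_0^{t_0}t^{1-2p_i}\big(\log\tfrac{t_0}{t}\big)^2\,dt$ then generate the denominators $(1-p_i)^2$ of \eqref{initcondKasner}. The essential difficulty throughout is the unavoidable derivative loss in controlling $\|\nabla\psi(t)\|_{L^2}$ near the singularity: the scheme closes only because one spatial differentiation, together with the monotonicity of $W_i$, bounds $\partial_t\partial_{x_i}\psi$ with an integrable weight, and this is precisely what dictates the two-derivative structure of the hypotheses and fixes the constants $1-(\tfrac{2}{3\gamma})^2$ and $(1-p_i)^2$.
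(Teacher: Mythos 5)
Your proposal follows essentially the same route as the paper's proof: your monotone quantity $W(t)$ is exactly twice the paper's energy flux $t^{\frac{2}{\gamma}}\int_{\Sigma_t}J^{e_0}_0[\psi]\mathrm{vol}_{\Sigma_t}$ obtained from the multiplier $X=t^{\frac{2}{\gamma}}e_0$ (resp.\ $te_0$ for Kasner), your identification $\lim_{t\to0}\|P\|^2_{L^2}=(1-\tfrac{2}{\gamma})^2\|A\|^2_{L^2}$ is the paper's \eqref{Jlim} (resp.\ \eqref{varphienlim}, which in fact requires the flux convergence \eqref{diffenestFLRW} rather than the bare statement of Theorem \ref{Thm:Asym}), and your weighted bound on $\|\partial_{x_i}\psi(t)\|_{L^2}$ via the once-commuted energy is precisely Lemma \ref{lem:L2estFLRW}. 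The only deviation is bookkeeping: you expand the square exactly and treat the cross term by Young's inequality, whereas the paper uses $(a+b)^2\le 2a^2+2b^2$ and then evaluates the resulting $s$-integral exactly, which is what yields the constants $1-(\tfrac{2}{3\gamma})^2$ and $(1-p_i)^2$ verbatim.
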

\begin{remark}
Given a blowing up solution to the wave equation in either FLRW or Kasner, having non-vanishing leading order coefficient $A(x)$, it is easy to see, using the expansions (\ref{psiasymFLRW}), (\ref{psiasymKasner}), that the solution satisfies the conditions in Theorem \ref{Thm:BlowUP} for $t_0>0$ sufficiently small. 
%and any neighborhood $U_{t_0}$. 
\end{remark}

We also prove a local version of Theorem \ref{Thm:BlowUP}, giving open initial conditions in a neighbourhood of $\Sigma_{t_0}$, $U_{t_0}$, whose domain of dependence intersects the singular hypersurface $\Sigma_0$ at a neighbourhood $U_0$, where the $L^2(U_0)$ norm of the corresponding solutions blows up.
\begin{figure}[h]
	\centering
	\def\svgwidth{9cm}
	\includegraphics[scale=1.2]{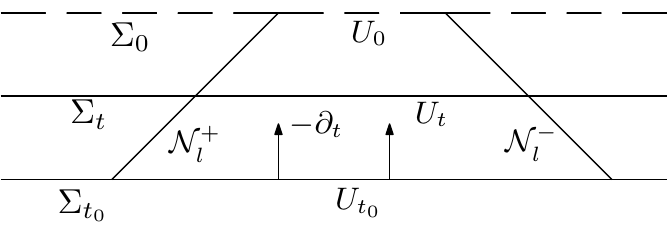}
	\caption{Domain of dependence of an open neighborhood $U_{t_0}$ of the initial hypersurface $\Sigma_{t_0}$.} \label{domBigBang}
\end{figure}

\begin{theorem}\label{Thm:BlowUP_local}
Let $U_{t_0}$ be an open neighborhood in $\Sigma_{t_0}$, $t_0>0$, whose domain of dependence intersects $\Sigma_0$ in $U_0=(0,\delta)^3$, and let 
$\psi$ be a smooth solution to the wave equation, $\square_g\psi=0$, for either of the metrics $g_{\mathrm{FLRW}}, g_{\mathrm{Kasner}}$, arising from initial data $(\psi_0,\partial_t\psi_0)$ on $U_{t_0}$. If $\partial_t\psi_0$ is non-zero in $U_{t_0}$, $t_0$ is sufficiently small such that 
\begin{align}
\label{localsmallt0FLRW}
\frac{2t_0^{2-\frac{4}{3\gamma}}}{1-(\frac{2}{3\gamma})^2}\sum_{i=1}^3\|\partial_t\partial_{x_i}\psi_0\|_{L^2(U_{t_0})}^2
+\frac{4t_0^{1-\frac{2}{3\gamma}}}{1-\frac{2}{3\gamma}}\bigg[
3\|\partial_t\psi_0\|^2_{L^2(U_{t_0})}
+\sum_{l=1}^3\|\partial_t\partial_{x_l}\psi_0\|^2_{L^2(U_{t_0})}\bigg]\notag&\\
+6\log\big(1+\frac{2}{1-\frac{2}{3\gamma}}\frac{t^{1-\frac{2}{3\gamma}}_0}{\delta}\big)\|\partial_t\psi_0\|^2_{L^2(U_{t_0})}<\epsilon\|\partial_t\psi_0\|_{L^2(U_{t_0})},& &\mathrm{(FLRW)}\\
\label{localsmallt0Kasner}\sum_{i=1}^3\frac{2t_0^{2-2p_i}}{(1-p_i)^2}\|\partial_t\partial_{x_i}\psi_0\|_{L^2(U_{t_0})}^2
+\sum_{l=1}^3\frac{4t_0^{1-p_l}}{1-p_l}\big[
\|\partial_t\psi_0\|^2_{L^2(U_{t_0})}
+\|\partial_t\partial_{x_l}\psi_0\|^2_{L^2(U_{t_0})}\big]\notag&\\
+\sum_{l=1}^32\log\big(1+\frac{2}{1-p_l}\frac{t^{1-p_l}_0}{\delta}\big)\|\partial_t\psi_0\|^2_{L^2(U_{t_0})}<\epsilon\| \partial_t\psi_0\|^2_{L^2(U_{t_0})},&& (\mathrm{Kasner})
\end{align}
and $(\psi_0,\partial_t\psi_0)$ satisfy the open conditions:
\begin{align}
\label{localinitcondFLRW}\notag(1-\epsilon)\|\partial_t\psi_0 \|^2_{L^2(U_{t_0})} >&\, 
t^{-\frac{4}{3\gamma}}_0\sum_{i=1}^3\|\partial_{x_i}\psi_0\|^2_{L^2(U_{t_0})}
+\frac{2t_0^{2-\frac{8}{3\gamma}}}{1-(\frac{2}{3\gamma})^2}\sum_{i,j=1}^3\|\partial_{x_j}\partial_{x_i}\psi_0\|_{L^2(U_{t_0})}^2\\
&+\frac{4t_0^{1-\frac{2}{\gamma}}}{1-\frac{2}{3\gamma}}\bigg[
3t_0^{-\frac{4}{3\gamma}}\sum_{i=1}^3\|\partial_{x_i}\psi_0\|^2_{L^2(U_{t_0})}
+t_0^{-\frac{4}{3\gamma}}\sum_{i,l=1}^3\|\partial_{x_i}\partial_{x_l}\psi_0\|^2_{L^2(U_{t_0})}
\bigg]\\
\notag&+6\log\big(1+\frac{2}{1-\frac{2}{3\gamma}}\frac{t^{1-\frac{2}{\gamma}}_0}{\delta}\big)t_0^{-\frac{4}{3\gamma}}\sum_{i=1}^3\|\partial_{x_i}\psi_0\|^2_{L^2(U_{t_0})},&\mathrm{(FLRW)}\\
\label{localinitcondKasner} \notag(1-\epsilon)\|\partial_t\psi_0\|^2_{L^2(U_{t_0})}>&\,\sum_{i=1}^3t_0^{-2p_i}\|\partial_{x_i}\psi_0\|^2_{L^2(U_{t_0})}
+\sum_{i,j=1}^3\frac{2t_0^{2-2p_i-2p_j}}{(1-p_i)^2}\|\partial_{x_j}\partial_{x_i}\psi_0\|_{L^2(U_{t_0})}^2\\
&+\sum_{i,l=1}^3\frac{4t_0^{1-p_l}}{1-p_l}\big[
t_0^{-2p_i}\|\partial_{x_i}\psi_0\|^2_{L^2(U_{t_0})}
+t_0^{-2p_i}\|\partial_{x_i}\partial_{x_l}\psi_0\|^2_{L^2(U_{t_0})}
\big]\\
\notag&+\sum_{l=1}^32\log\big(1+\frac{2}{1-p_l}\frac{t^{1-\frac{2}{\gamma}}_0}{\delta}\big)\sum_{i=1}^3t_0^{-2p_i}\|\partial_{x_i}\psi_0\|^2_{L^2(U_{t_0})},&\mathrm{(Kasner)}
\end{align}
for some $0<\epsilon<1$, then $\|A(x)\|_{L^2(U_0)}>0$.	
\end{theorem}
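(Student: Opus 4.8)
The plan is to recover the leading coefficient $A(x)$ from a renormalized energy and to show that this energy has a strictly positive limit at the Big Bang. I treat the FLRW case; Kasner is identical after replacing the isotropic weight $t^{4/\gamma}$ by the anisotropic weights $t^{2-2p_i}$ in each coordinate direction. Writing $v=\partial_t\psi$ and using $\sqrt{|g_{\mathrm{FLRW}}|}=t^{2/\gamma}$, the equation reads $\partial_t v+\frac{2}{\gamma t}v=t^{-4/(3\gamma)}\Delta\psi$, and the stability estimates underlying Theorem \ref{Thm:Asym} give $t^{2/\gamma}v\to(1-\tfrac{2}{\gamma})A(x)$ in $L^2$. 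Hence the renormalized energy
\[
\mathcal E(t):=t^{4/\gamma}\int_{U_t}v^2\,dx,\qquad U_t:=D^-(U_{t_0})\cap\Sigma_t,
\]
satisfies $\mathcal E(t)\to(1-\tfrac{2}{\gamma})^2\|A\|_{L^2(U_0)}^2$ as $t\to0$, so it suffices to bound $\mathcal E$ below by a positive constant, i.e. to control the defect $\mathcal E(t_0)-\lim_{t\to0}\mathcal E(t)$. In the fixed-domain (global) setting $U_t=\mathbb T^3$ and no lateral boundary is present.

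I would first compute the evolution of $\mathcal E$. Using the equation, the friction term cancels the derivative of the weight, leaving $\mathcal E'(t)=2t^{8/(3\gamma)}\int_{U_t}v\,\Delta\psi\,dx$; integrating by parts in space this equals $-t^{8/(3\gamma)}\partial_t\!\int_{U_t}|\nabla\psi|^2\,dx$ up to flux terms on $\partial U_t$. Integrating in $t$ from $t$ to $t_0$ and then integrating by parts once more in $t$, the defect is expressed through the slice terms $t^{8/(3\gamma)}\|\nabla\psi(t)\|_{L^2(U_t)}^2$ (which vanish as $t\to0$ since $2-\tfrac{4}{3\gamma}>0$) and the principal error $\tfrac{8}{3\gamma}\int_0^{t_0}s^{8/(3\gamma)-1}\|\nabla\psi(s)\|_{L^2(U_s)}^2\,ds$.

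The core step is to bound $\|\nabla\psi(s)\|$ well enough that this integral converges and stays small. Since the coefficients are $x$-independent, each $\partial_{x_i}\psi$ again solves $\square_g(\partial_{x_i}\psi)=0$, so its renormalized energy $\mathcal E_i(t)=t^{4/\gamma}\|\partial_t\partial_{x_i}\psi(t)\|^2$ obeys the same identity; dropping the favorably-signed terms yields the one-sided bound $\mathcal E_i(s)\le \mathcal E_i(t_0)+t_0^{8/(3\gamma)}\|\nabla\partial_{x_i}\psi(t_0)\|^2$, i.e. $\|\partial_t\partial_{x_i}\psi(s)\|\lesssim s^{-2/\gamma}$ in terms of the one- and two-derivative data on $\Sigma_{t_0}$ only. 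Feeding this into $\partial_{x_i}\psi(s)=\partial_{x_i}\psi(t_0)-\int_s^{t_0}\partial_t\partial_{x_i}\psi\,d\tau$ gives $\|\nabla\psi(s)\|\lesssim s^{1-2/\gamma}$, and the resulting weighted time integral $\int_0^{t_0}s^{8/(3\gamma)-1}s^{2-4/\gamma}\,ds$ converges precisely because its exponent $1-\tfrac{4}{3\gamma}>-1\iff\gamma>2/3$. Every such integral is elementary, and its value is exactly the combination of $\|\partial_{x_i}\psi_0\|$, $\|\partial_t\partial_{x_i}\psi_0\|$ and $\|\partial_{x_j}\partial_{x_i}\psi_0\|$ appearing with the constants $t_0^{-4/(3\gamma)}$ and $2t_0^{2-8/(3\gamma)}/(1-(2/(3\gamma))^2)$ in \eqref{localinitcondFLRW} (and, with no boundary, in \eqref{initcondFLRW}); the smallness threshold \eqref{smallt0FLRW} absorbs the $\|\partial_t\partial_{x_i}\psi_0\|$ part. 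The Kasner computation is the same, except that $\int_s^{t_0}\tau^{-1}d\tau$ produces logarithms which, weighted by $s^{1-2p_j}$, integrate to the finite constants $(1-p_j)^{-k}$ after a $u=\log(t_0/s)$ substitution; hence $\gamma>2/3$ is replaced throughout by $p_j<1$.

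For the local statement the new ingredient is the geometry of $D^-(U_{t_0})$, whose slices $U_s$ shrink from $U_{t_0}$ down to $U_0=(0,\delta)^3$ and whose lateral boundary is null; finite speed of propagation guarantees the estimate only sees the data on $U_{t_0}$. Two things change: the Reynolds transport theorem for the moving domain and the spatial integrations by parts both generate flux integrals on $\partial U_s$, and the bound on $\|\nabla\psi(s)\|_{L^2(U_s)}$ now requires controlling traces on $\partial U_s$. Estimating these through the coordinate width $w(s)=\delta+\tfrac{2}{1-2/(3\gamma)}s^{1-2/(3\gamma)}$ of the cone (coming from $dx/dt=\pm t^{-2/(3\gamma)}$) and integrating the relative rate $\tfrac12\int_0^{t_0}w'(s)/w(s)\,ds$ produces exactly the factors $\log\!\big(1+\tfrac{2}{1-2/(3\gamma)}t_0^{1-2/(3\gamma)}/\delta\big)$, while the remaining boundary contributions give the additional $\tfrac{4t_0^{1-2/(3\gamma)}}{1-2/(3\gamma)}[\cdots]$ terms; the widths $\delta+\tfrac{2}{1-p_l}s^{1-p_l}$ give \eqref{localsmallt0Kasner}, \eqref{localinitcondKasner}. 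I expect the main obstacle to be precisely this boundary bookkeeping: ensuring the null-boundary fluxes carry a favorable sign (or are dominated by the controlled quantities) and closing the derivative hierarchy on the nested cones $D^-(U_t)\subset D^-(U_{t_0})$ while tracking constants sharply enough to match the stated thresholds, so that \eqref{localsmallt0FLRW}–\eqref{localinitcondFLRW} force $\lim_{t\to0}\mathcal E(t)>0$ and therefore $\|A\|_{L^2(U_0)}>0$.
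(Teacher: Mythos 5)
Your proposal is correct and follows essentially the same route as the paper: the same weighted multiplier energy (equivalently $X=t^{2/\gamma}e_0$, resp. $te_0$) on the shrinking slices $U_t$ of the domain of dependence, the bulk term controlled by commuted energy bounds plus a fundamental-theorem-of-calculus $L^2$ estimate on $\partial_{x_i}\psi$ (with the same exponent bookkeeping forcing $\gamma>2/3$, resp. $p_j<1$), the lateral null fluxes dominated via the one-dimensional Sobolev/trace inequality with the cone width $w(s)$, whose $w^{-1}$ part integrates to the logarithmic factors and whose $H^1$ part gives the $t_0^{1-2/(3\gamma)}$ (resp. $t_0^{1-p_l}$) terms, and the limit energy identified with $(1-\tfrac{2}{\gamma})^2\|A\|^2_{L^2(U_0)}$ (resp. $\|A\|^2_{L^2(U_0)}$). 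The only cosmetic difference is that you compute the identity in coordinates (Reynolds transport plus spatial integration by parts) rather than via the covariant divergence theorem, and the sign of the null fluxes you flag as a potential obstacle is automatic in the paper's formulation, where they appear as the manifestly non-negative quantities $\frac12\big[|(e_0\pm e_l)\psi|^2+|e_i\psi|^2+|e_j\psi|^2\big]$.
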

The blow up behaviour of linear waves observed near Big Bang singularities is reminiscent of the behaviour of waves in black hole interiors containing spacelike singularities \cite{Burko,DorNov}. Examples are the Schwarzschild singularity or black hole singularities occurring in spherically symmetric solutions to the Einstein-scalar field model \cite{christo_sing}, where a logarithmic blow up behaviour has been observed for spatially homogeneous waves. Such logarithmic blow up behaviour was recently confirmed \cite{FourSbier} for generic linear waves in the Schwarzschild black hole interior. The aforementioned blow up behaviours, however, are in contrast to the behaviour of waves observed near null boundaries, where linear and dynamical waves have been shown
in general to extend continuously past the relevant null hypersurfaces \cite{joao_ich}, \cite{anne}-\cite{peter2}, see also~\cite{LP18}.

Lastly, we should note that although we only deal with spatially homogeneous spacetimes, our method of proof is applicable to cosmological spacetimes with Big Bang singularities exhibiting asymptotically velocity term dominated (AVTD) behaviour \cite{ABIF13,Uggla,BIM04,DHRW02,ELS72,HS11,IM02,Rin05}.

\section{Proof of main theorems}

\subsection{Energy argument and notation}

In order to derive the energy estimates required for the proof of Theorem \ref{Thm:BlowUP}, we will apply the vector field method and define certain energy currents constructed from the \emph{stress-energy tensor} 
\begin{align}\label{Tab}
T_{a b}[\psi] = \partial_a \psi \partial_b \psi - \frac{1}{2} g_{a b} \partial^c \psi \partial_c \psi \;,
\end{align}
of the scalar field $\psi$.
The divergence of $T_{ab}[\psi]$ reads
\begin{align}\label{divTab}
\nabla^a T_{a b}[\psi]=\partial_b\psi\cdot\square_g\psi,
\end{align}
where $\nabla$ stands for the spacetime covariant connection. Hence, if $\psi$ satisfies the homogeneous wave equation 
\begin{align}
\square_g\psi=0,
\end{align}
it follows that we have energy-momentum conservation $\nabla^a T_{a b}[\psi] = 0$. Contracting the stress-energy tensor with a vector field multiplier $X$, then defines the associated current 
\begin{align}\label{Ja}
J^X_a[\psi] = X^b T_{a b}[\psi], 
\end{align}
whose divergence, according to (\ref{divTab}), equals
\begin{align}\label{divJa}
\nabla^aJ^X_a[\psi]=(\nabla^aX^b)T_{ab}[\psi]+X\psi\cdot\square_g\psi
\end{align}
Applying the divergence theorem to (\ref{divJa}), over the spacetime domain $\{U_s\}_{s\in[t,t_0]}$ (Figure \ref{domBigBang}), we thus obtain:
\begin{equation}
\label{divthe}
\int_{U_t} J^X_a [\psi]n^a_{U_t} \mathrm{vol}_{U_t}+\sum_{l=1}^3\int_{\cup\mathcal{N}_l^\pm}J^X_a[\psi]n^a_{\mathcal{N}_l^\pm}\mathrm{vol}_{\mathcal{N}_l^\pm}
=\int_{U_{t_0}} J^X_a[\psi]n^a_{U_{t_0}} \mathrm{vol}_{U_{t_0}}-\int_t^{t_0}\int_{U_s} \nabla^a J_a^X[\psi] \mathrm{vol}_{U_s} d s,
\end{equation}
where $n_{U_t}=-\partial_t$, $\mathrm{vol}_{U_t}$ is the intrinsic volume form of $U_t$ and 
\begin{align}\label{n}
\begin{split}
n_{\mathcal{N}_l^\pm}=-\partial_t\pm t^{-\frac{2}{3\gamma}}\partial_{x_l},\qquad\mathrm{vol}_{\mathcal{N}_l^\pm}=t^\frac{4}{3\gamma}dtdx_idx_j,\qquad 
t^{-\frac{2}{3\gamma}}dt=\pm d{x_l},\;\;\text{on $\mathcal{N}_l^\pm$}\qquad\text{(FLRW)}& \\
n_{\mathcal{N}_l^\pm}=-\partial_t\pm t^{-p_l}\partial_{x_l},\qquad\mathrm{vol}_{\mathcal{N}_l^\pm}=t^{1-p_l}dtdx_idx_j,\qquad 
t^{-p_l}dt=\pm d{x_l},\;\;\text{on $\mathcal{N}_l^\pm$}\qquad\text{(Kasner)}&
\end{split}
\end{align}
for each $l=1,2,3$; $i<j$; $i,j\neq l$. 

\sloppy
Below we will choose the vector field $X$ to be a suitable rescaling $n_{U_t}$.  Note that, \mbox{$n^{a}_{U_t} J^X_a[\psi]=J^{X}_{0}[\psi]=\frac{1}{2}[(\partial_t\psi)^2+|\overline{\nabla}\psi|^2]$} and $J^X_a[\psi]n^a_{\mathcal{N}_l^\pm}\ge0$, where $\overline{\nabla}$ is the covariant derivative intrinsic to the level sets of $t$. Hence, if we can control the second term on the RHS of (\ref{divthe}) (the bulk), in terms of $J^X_0[\psi]$, we obtain an energy estimate for $\psi$.

\fussy
Notice that in the case of the whole torus, $U_{t_0}=\Sigma_{t_0}$, due to the absence of causal boundary terms, (\ref{divthe}) becomes 
\begin{equation}
\label{divthe2}
\int_{\Sigma_t} J^X_a [\psi]n^a_{\Sigma_t} \mathrm{vol}_{\Sigma_t}
=\int_{\Sigma_{t_0}} J^X_a[\psi]n^a_{\Sigma_{t_0}} \mathrm{vol}_{\Sigma_{t_0}}-\int_t^{t_0}\int_{\Sigma_s} \nabla^a J_a^X[\psi] \mathrm{vol}_{\Sigma_s} ds.
\end{equation}

In the analysis that follows, we will often require higher order energy estimates that we can obtain by commuting the wave equation with spatial derivatives and applying the above energy argument. Note, that we are considering homogeneous spacetimes in which the spatial coordinate derivatives $\{\partial_{x_i}\}$ are Killing and hence $[\square_g,\partial_{x_i}]=0$, $i=1,2,3$. This means that the above identities (\ref{divthe}), (\ref{divthe2}) are also valid for $\partial_x^\alpha\psi$, where we use the standard multi-index notation $\partial_x^\alpha=\partial_{x_1}^{\alpha_1}\partial_{x_2}^{\alpha_2}\partial_{x_3}^{\alpha_3}$ for an iterated application of spatial derivatives, $\alpha=(\alpha_1,\alpha_2,\alpha_3)$, $|\alpha|=\alpha_1+\alpha_2+\alpha_3$. In this notation, the $H^k(\Sigma_t)$ norm of a smooth function $f:(0,+\infty)\times \mathbb{T}^3$ equals
\begin{align}\label{Hk}
\|f\|_{H^k(\Sigma_t)}^2=\sum_{|\alpha|\leq k}\int_{\Sigma_t}(\partial_x^\alpha f)^2\mathrm{vol}_{Euc},
\end{align}
where $\mathrm{vol}_{Euc}=dx_1dx_2dx_3$. We will often omit $\Sigma_t$ from the norms to ease notation and use $H^k,L^2$ for the corresponding time-dependent, non-intrinsic norms.

\subsection{Flat FLRW}

Let $\psi$ be a smooth solution to the scalar wave equation
\begin{align}\label{boxpsiFLRW}
\square_{g_{\mathrm{FLRW}}}\psi=0.
\end{align}
Consider the orthonormal frame
\begin{align}\label{frame}
e_0=-\partial_t,\qquad e_i=t^{-\frac{2}{3\gamma}}\partial_{x_i}
\end{align}
adapted to the constant $t$ hypersurfaces $\Sigma_t$ with the past normal vector field $e_0$ pointing towards the singularity. In this frame, the second fundamental form $K_{ij}$ of $\Sigma_t$ reads 
\begin{align}\label{KiiFLRW}
K_{ii}:=g(\nabla_{e_i}e_0,e_i)=-\frac{2}{3\gamma}\frac{1}{t},&&i=1,2,3.
\end{align}
Further, the intrinsic volume form on $\Sigma_t$ equals
\begin{align}\label{volFLRW}
\mathrm{vol}_{\Sigma_t}=t^{\frac{2}{\gamma}}\mathrm{vol}_{Euc}.
\end{align}
\begin{proposition}\label{prop:upperestFLRW}
The following energy inequality holds: 
\begin{align}\label{psienestFLRW}
t^{\frac{2}{\gamma}}\int_{\Sigma_t}J^{e_0}_0[\partial_x^\alpha\psi]\mathrm{vol}_{\Sigma_t}\leq t^\frac{2}{\gamma}_0\int_{\Sigma_{t_0}}J^{e_0}_0[\partial_x^\alpha\psi]\mathrm{vol}_{\Sigma_{t_0}},
\end{align}
for all $t\in(0,t_0]$ and any multi-index $\alpha$. Moreover, $\psi$ satisfies the pointwise bound
\begin{align}\label{LinftypsiFLRW}
|\psi(t,x)|\leq C\bigg(\sum_{|\alpha|\leq2}t^\frac{2}{\gamma}_0\int_{\Sigma_{t_0}}J^{e_0}_0[\partial_x^\alpha\psi]\mathrm{vol}_{\Sigma_{t_0}}\bigg)^\frac{1}{2} \frac{t^{1-\frac{2}{\gamma}}-t_0^{1-\frac{2}{\gamma}}}{\frac{2}{\gamma}-1}+|\psi(t_0,x)|,
\end{align}
where $C>0$ is a constant independent of $t_0,\gamma$.
\end{proposition}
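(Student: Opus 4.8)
The plan is to prove \eqref{psienestFLRW} by applying the divergence identity \eqref{divthe2} with the multiplier $X=e_0=-\partial_t$, which on the full torus carries no causal boundary contributions. Since $\square_g\psi=0$, formula \eqref{divJa} reduces the flux divergence to $\nabla^aJ^{e_0}_a[\psi]=\tfrac12{}^{(e_0)}\pi^{ab}T_{ab}[\psi]$, the contraction of the stress tensor with the deformation tensor of $e_0$. Because $\partial_t$ is geodesic and $g(e_0,e_0)=-1$ is constant, the only nonzero frame components of ${}^{(e_0)}\pi_{ab}$ are the spatial ones $\pi_{ij}=2K_{ij}=-\tfrac{4}{3\gamma t}\delta_{ij}$ by \eqref{KiiFLRW}. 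A short trace computation using $T_{ij}=e_i\psi\,e_j\psi-\tfrac12\delta_{ij}(e^c\psi\,e_c\psi)$ then gives
\[
\nabla^aJ^{e_0}_a[\psi]=-\tfrac{1}{\gamma t}(\partial_t\psi)^2+\tfrac{1}{3\gamma t}|\overline{\nabla}\psi|^2,
\]
which is \emph{not} sign-definite; this is precisely the obstacle. Equivalently one may write the wave equation as $\partial_t^2\psi+\tfrac{2}{\gamma t}\partial_t\psi=t^{-\frac{4}{3\gamma}}\Delta\psi$ and verify the same identity by hand.

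The main point, and the only real subtlety, is the choice of weight $t^{\frac{2}{\gamma}}$. Setting $\mathcal{E}(t):=t^{\frac{2}{\gamma}}\int_{\Sigma_t}J^{e_0}_0[\psi]\,\mathrm{vol}_{\Sigma_t}$ and differentiating in $t$ (using \eqref{divthe2} to evaluate $\tfrac{d}{dt}\int_{\Sigma_t}J^{e_0}_0\,\mathrm{vol}_{\Sigma_t}$ as the integral of the bulk term above), the derivative of the weight contributes $\tfrac{2}{\gamma t}\cdot\tfrac12(\partial_t\psi)^2$, which cancels the dangerous $-\tfrac{1}{\gamma t}(\partial_t\psi)^2$ term exactly and upgrades the $|\overline{\nabla}\psi|^2$ coefficient, leaving
\[
\mathcal{E}'(t)=t^{\frac{2}{\gamma}}\int_{\Sigma_t}\tfrac{4}{3\gamma t}|\overline{\nabla}\psi|^2\,\mathrm{vol}_{\Sigma_t}\ge 0 .
\]
Hence $\mathcal{E}$ is nondecreasing, so $\mathcal{E}(t)\le\mathcal{E}(t_0)$ for $t\le t_0$, which is \eqref{psienestFLRW}. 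Since $[\square_g,\partial_{x_i}]=0$, the identity \eqref{divthe2} holds verbatim for $\partial_x^\alpha\psi$, and the identical argument yields \eqref{psienestFLRW} for every multi-index $\alpha$.

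For the pointwise bound \eqref{LinftypsiFLRW} I would start from the fundamental theorem of calculus, $|\psi(t,x)-\psi(t_0,x)|\le\int_t^{t_0}|\partial_s\psi(s,x)|\,ds$, and control the integrand by Sobolev embedding $H^2(\mathbb{T}^3)\hookrightarrow L^\infty(\mathbb{T}^3)$, giving $\sup_x|\partial_s\psi(s,\cdot)|\le C\big(\sum_{|\alpha|\le 2}\|\partial_s\partial_x^\alpha\psi\|^2_{L^2(\mathbb{T}^3)}\big)^{1/2}$. Using $J^{e_0}_0[\partial_x^\alpha\psi]\ge\tfrac12(\partial_s\partial_x^\alpha\psi)^2$ together with $\mathrm{vol}_{\Sigma_s}=s^{\frac{2}{\gamma}}\mathrm{vol}_{Euc}$ from \eqref{volFLRW} bounds each $L^2$ norm by $2s^{-\frac{2}{\gamma}}\int_{\Sigma_s}J^{e_0}_0[\partial_x^\alpha\psi]\,\mathrm{vol}_{\Sigma_s}$; inserting the energy inequality \eqref{psienestFLRW} produces $\|\partial_s\partial_x^\alpha\psi\|^2_{L^2}\le 2s^{-\frac{4}{\gamma}}t_0^{\frac{2}{\gamma}}\int_{\Sigma_{t_0}}J^{e_0}_0[\partial_x^\alpha\psi]\,\mathrm{vol}_{\Sigma_{t_0}}$. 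Thus $\sup_x|\partial_s\psi(s,\cdot)|\le C'\,s^{-\frac{2}{\gamma}}M$, where $M$ is the square root of the summed initial energies appearing in \eqref{LinftypsiFLRW}, and integrating $\int_t^{t_0}s^{-\frac{2}{\gamma}}\,ds=(t^{1-\frac{2}{\gamma}}-t_0^{1-\frac{2}{\gamma}})/(\tfrac{2}{\gamma}-1)$, which is finite and positive since $1<\tfrac{2}{\gamma}<3$, yields \eqref{LinftypsiFLRW}. The constant $C$ is the Sobolev constant of $\mathbb{T}^3$ (times $\sqrt 2$), hence independent of $t_0,\gamma$.
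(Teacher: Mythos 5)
Your proof is correct and follows essentially the same route as the paper: the paper applies the divergence identity directly to the weighted current $J^{t^{2/\gamma}e_0}_a[\psi]$, obtaining the same favourable bulk term $\tfrac{4}{3\gamma}t^{\frac{2}{\gamma}-1}|\overline{\nabla}\psi|^2$ that you get by differentiating $\mathcal{E}(t)$ — the two computations differ only in where the product rule for the weight is applied. Your derivation of the pointwise bound (energy inequality $\Rightarrow$ $\|\partial_t\partial_x^\alpha\psi\|_{L^2}\lesssim s^{-2/\gamma}$, then Sobolev embedding $H^2(\mathbb{T}^3)\hookrightarrow L^\infty(\mathbb{T}^3)$ and integration of $s^{-2/\gamma}$) is exactly the paper's argument.
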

\begin{proof}
We compute the divergence of $J^{t^\frac{2}{\gamma}e_0}_a[\psi]$:
\begin{align}\label{divJt2/gammae0}
\nabla^aJ^{t^\frac{2}{\gamma}e_0}_a[\psi]\overset{\eqref{divJa}}{=}&\,\nabla^a(t^\frac{2}{\gamma}e_0)^bT_{ab}[\psi]=t^\frac{2}{\gamma}K^{ab}T_{ab}[\psi]-e_0t^\frac{2}{\gamma}T_{00}[\psi]\\
\notag=&\,t^\frac{2}{\gamma}K_{11}|\overline{\nabla}\psi|^2-\frac{1}{2}t^\frac{2}{\gamma}{K^i}_i|\nabla\psi|^2+\frac{1}{\gamma}t^{\frac{2}{\gamma}-1}[(e_0\psi)^2+|\overline{\nabla}\psi|^2]\\
\notag\overset{\eqref{KiiFLRW}}{=}&\,\frac{4}{3\gamma}\frac{1}{t}t^\frac{2}{\gamma}|\overline{\nabla}\psi|^2.
\end{align}
Hence, utilising (\ref{divthe2}) for $X=t^\frac{2}{\gamma}e_0$ yields
\begin{align}\label{Stokes1FLRW}
t^\frac{2}{\gamma} \int_{\Sigma_t}J^{e_0}_0[\psi]\mathrm{vol}_{\Sigma_t}=&\,t^\frac{2}{\gamma}_0\int_{\Sigma_{t_0}}J^{e_0}_0[\psi]\mathrm{vol}_{\Sigma_{t_0}}-\int^{t_0}_t\int_{\Sigma_s}\frac{4}{3\gamma}s^{\frac{2}{\gamma}-1}|\overline{\nabla}\psi|^2\mathrm{vol}_{\Sigma_s} ds. \\
\leq &\, t^\frac{2}{\gamma}_0\int_{\Sigma_{t_0}}J^{e_0}_0[\psi]\mathrm{vol}_{\Sigma_{t_0}}
\end{align}
The same identity is valid for $\partial_x^\alpha\psi$, leading to (\ref{psienestFLRW}). 
In particular, taking into account the volume form (\ref{volFLRW}), we have the following bounds for $\partial_t\psi$:
\begin{align}\label{e0psiest}
t^{\frac{4}{\gamma}}\|\partial_t\partial_x^\alpha\psi\|^2_{L^2}\leq 2t^\frac{2}{\gamma}_0\int_{\Sigma_{t_0}}J^{e_0}_0[\partial^\alpha_x\psi]\mathrm{vol}_{\Sigma_{t_0}},
\end{align}
for all $t\in(0,t_0]$ and $\alpha$. Integrating $\partial_t\psi$ in $[t,t_0]$ and employing the Sobolev embedding $H^2(\mathbb{T}^3)\hookrightarrow L^\infty(\mathbb{T}^3)$ we derive:
\begin{align}\label{Linftypsi}
|\psi(t,x)|=&\,\big|\int^t_{t_0}\partial_s\psi(s,x) ds+\psi(t_0,x)\big|\\
\notag\leq&\,C\int^{t_0}_t\|\partial_s\psi\|_{H^2} ds+|\psi(t_0,x)|\\
\tag{$\gamma<2$}\leq&\,\frac{C}{\frac{2}{\gamma}-1}(t^{1-\frac{2}{\gamma}}-t_0^{1-\frac{2}{\gamma}}) (\sum_{|\alpha|\leq2}t^\frac{2}{\gamma}_0\int_{\Sigma_{t_0}}J^{e_0}_0[\partial_x^\alpha\psi]\mathrm{vol}_{\Sigma_{t_0}})^{\frac{1}{2}}+|\psi(t_0,x)|.
\end{align}
\end{proof}
\begin{remark}
The bounds (\ref{psienestFLRW}), (\ref{LinftypsiFLRW}) are saturated by the homogeneous function $t^{1-\frac{2}{\gamma}}$, which is an exact solution of (\ref{boxpsiFLRW}).
\end{remark}
From the previous proposition we understand that $t^{1-\frac{2}{\gamma}}$ is the leading order of $\psi$ at $t=0$. To prove this rigorously, we derive analogous energy bounds for the renormalised variable $\frac{\psi}{t^{1-\frac{2}{\gamma}}}$ that satisfies the wave equation:
\begin{align}\label{boxvarphi/t^1-2/gamma}
\square\frac{\psi}{t^{1-\frac{2}{\gamma}}}=-\frac{2}{t}(1-\frac{2}{\gamma})e_0(\frac{\psi}{t^{1-\frac{2}{\gamma}}}).
\end{align}
\begin{proposition}\label{prop:renestFLRW}
Let $\psi$ be a smooth solution to the wave equation in FLRW backgrounds with $\frac23<\gamma<2$. Then, the following bounds hold uniformly in $t\in(0,t_0]$:
\begin{align}
\label{renenestFLRW}t^{4-\frac{6}{\gamma}}\int_{\Sigma_t}J^{e_0}_0 [\partial_x^\alpha\frac{\psi}{t^{1-\frac{2}{\gamma}}} ]\mathrm{vol}_{\Sigma_t}\leq t^{4-\frac{6}{\gamma}}_0\int_{\Sigma_{t_0}}J^{e_0}_0 [\partial_x^\alpha\frac{\psi}{t^{1-\frac{2}{\gamma}}} ]\mathrm{vol}_{\Sigma_{t_0}},\qquad\frac{4}{3}\leq\gamma<2,\\
\label{renenestFLRW_softer} 
t^{-\frac{2}{3\gamma}}\int_{\Sigma_t}J^{e_0}_0 [\partial_x^\alpha\frac{\psi}{t^{1-\frac{2}{\gamma}}} ]\mathrm{vol}_{\Sigma_t}\leq t^{-\frac{2}{3\gamma}}_0\int_{\Sigma_{t_0}}J^{e_0}_0 [\partial_x^\alpha\frac{\psi}{t^{1-\frac{2}{\gamma}}} ]\mathrm{vol}_{\Sigma_{t_0}},\qquad\frac{2}{3}<\gamma\leq\frac{4}{3},
\end{align}
for all $t\in(0,t_0]$ and any multi-index $\alpha$. 
Moreover, the limit
\begin{align}\label{AdefFLRW}
A(x):=\lim_{t\rightarrow0}\frac{\psi}{t^{1-\frac{2}{\gamma}}}
\end{align}
exists, it is a smooth function and the difference $u(t,x):=\psi-A(x){t^{1-\frac{2}{\gamma}}}$ satisfies
\begin{align}\label{diffenestFLRW}
\lim_{t\rightarrow0}t^\frac{2}{\gamma}\int_{\Sigma_t}J^{e_0}_0[\partial^\alpha_xu]\mathrm{vol}_{\Sigma_s}=0.
\end{align}
\end{proposition}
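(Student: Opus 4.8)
The plan is to run the vector-field method of Proposition~\ref{prop:upperestFLRW} on the renormalised variable $\varphi:=\psi\,t^{\frac{2}{\gamma}-1}$ (so $\varphi=\psi/t^{1-\frac{2}{\gamma}}$), but now with a $t$-weight tuned to the two subranges of $\gamma$, and then to upgrade the energy bounds to convergence and to the difference estimate. For a multiplier $X=f(t)e_0$ with $f$ to be chosen I would repeat the computation \eqref{divJt2/gammae0}, this time retaining the inhomogeneity \eqref{boxvarphi/t^1-2/gamma}. Starting from $\nabla^a J^{fe_0}_a[\varphi]=f K^{ab}T_{ab}[\varphi]-(e_0 f)T_{00}[\varphi]+f(e_0\varphi)\,\square\varphi$, the same algebra as in \eqref{divJt2/gammae0} together with $\square\varphi=-\frac{2}{t}(1-\frac{2}{\gamma})e_0\varphi$ gives
\begin{align*}
\nabla^a J^{fe_0}_a[\varphi]=\Big(\tfrac{f'}{2}-\tfrac{f}{t}\big(2-\tfrac{3}{\gamma}\big)\Big)(e_0\varphi)^2+\Big(\tfrac{f'}{2}+\tfrac{f}{3\gamma t}\Big)|\overline{\nabla}\varphi|^2 .
\end{align*}
The key observation is that each candidate weight annihilates one coefficient and leaves the other single-signed: taking $f=t^{4-\frac{6}{\gamma}}$ kills the $(e_0\varphi)^2$-coefficient identically and leaves $\frac{f}{t}\big(2-\frac{8}{3\gamma}\big)|\overline{\nabla}\varphi|^2\geq0$ exactly when $\gamma\geq\frac43$, while taking $f=t^{-\frac{2}{3\gamma}}$ kills the $|\overline{\nabla}\varphi|^2$-coefficient and leaves $\frac{f}{t}\big(\frac{8}{3\gamma}-2\big)(e_0\varphi)^2\geq0$ exactly when $\gamma\leq\frac43$. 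Since $[\square,\partial_{x_i}]=0$ the renormalised equation also holds for $\partial_x^\alpha\varphi$, so inserting the definite sign into \eqref{divthe2} with $J^{fe_0}_0=f\,J^{e_0}_0$ yields the monotonicity statements \eqref{renenestFLRW} and \eqref{renenestFLRW_softer}.

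Next, to produce $A(x)$, I would extract a pointwise-in-$t$ bound on $\partial_t\partial_x^\alpha\varphi$ from these estimates. Using $J^{e_0}_0\geq\tfrac12(e_0\,\cdot\,)^2$ and $\mathrm{vol}_{\Sigma_t}=t^{\frac{2}{\gamma}}\mathrm{vol}_{Euc}$, the effective weight $f(t)\,t^{\frac{2}{\gamma}}$ equals $t^{4-\frac{4}{\gamma}}$ (resp. $t^{\frac{4}{3\gamma}}$), whence $\|\partial_t\partial_x^\alpha\varphi\|_{L^2}\leq C_\alpha\,t^{\frac{2}{\gamma}-2}$ (resp. $C_\alpha\,t^{-\frac{2}{3\gamma}}$). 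Both exponents exceed $-1$ precisely because $\frac23<\gamma<2$, so $\partial_t\varphi$ is integrable in $t$ near $0$ in every $H^k(\mathbb{T}^3)$; hence $\varphi(t,\cdot)$ is Cauchy in $H^k$ as $t\to0$ and converges to a limit $A\in\bigcap_k H^k(\mathbb{T}^3)=C^\infty(\mathbb{T}^3)$, which is \eqref{AdefFLRW}. Integrating the same bound also yields the quantitative rate $\|\partial_x^\beta(\varphi-A)\|_{L^2}\leq C\,t^{\frac{2}{\gamma}-1}$ (resp. $C\,t^{1-\frac{2}{3\gamma}}$) needed below.

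Finally, for the difference estimate I would write $u=t^{1-\frac{2}{\gamma}}w$ with $w:=\varphi-A$, so that $e_i\partial_x^\alpha u=t^{1-\frac{2}{\gamma}}e_i\partial_x^\alpha w$ and $\partial_t\partial_x^\alpha u=(1-\tfrac{2}{\gamma})t^{-\frac{2}{\gamma}}\partial_x^\alpha w+t^{1-\frac{2}{\gamma}}\partial_t\partial_x^\alpha w$. Substituting into $t^{\frac{2}{\gamma}}\int_{\Sigma_t}J^{e_0}_0[\partial_x^\alpha u]\,\mathrm{vol}_{\Sigma_t}$, the powers of $t$ collapse and, after Cauchy--Schwarz, one is left controlling $\|\partial_x^\alpha w\|_{L^2}^2$, $t^2\|\partial_t\partial_x^\alpha w\|_{L^2}^2$ and $t^{2-\frac{4}{3\gamma}}\sum_i\|\partial_{x_i}\partial_x^\alpha w\|_{L^2}^2$. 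Using $\partial_t w=\partial_t\varphi$ together with the rates of the previous paragraph, each of these carries a strictly positive power of $t$ on the whole range $\frac23<\gamma<2$ and therefore tends to $0$, giving \eqref{diffenestFLRW}.

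I expect the main obstacle to be the sign bookkeeping in the first step: recognising that two distinct weights are forced by the two subranges of $\gamma$, that in each case exactly one of the two quadratic terms drops out while the other is single-signed, and then verifying in the last step that the resulting exponents (such as $\frac{8}{3\gamma}$, $\frac{4}{\gamma}-2$ and $4-\frac{8}{3\gamma}$) stay positive throughout the open interval. These borderline balances at the endpoints $\gamma=\frac23,2$ are exactly what make the convergence work, and they are where a careless choice of weight would cause the argument to fail.
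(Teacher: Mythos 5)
Your proposal is correct and follows essentially the same route as the paper: the same general multiplier $f(t)e_0$ computation (the paper's \eqref{divJren} with $f=t^\eta$), the same two weights $t^{4-\frac{6}{\gamma}}$ and $t^{-\frac{2}{3\gamma}}$ annihilating one quadratic term and leaving the other single-signed in the respective $\gamma$-ranges, and the same integrability-in-time argument giving $A(x)$ and the vanishing of the difference energy. The only cosmetic differences are that you phrase convergence as Cauchy-in-$H^k$ rather than the paper's pointwise Sobolev-embedding argument, and you substitute $u=t^{1-\frac{2}{\gamma}}(\varphi-A)$ with quantitative rates where the paper expands the flux of $u$ directly; both are equivalent in substance.
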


\begin{proof}
Let $\eta>0$. We compute
\begin{align}\label{divJren}
\nabla^a(J^{t^\eta e_0}_a[\frac{\psi}{t^{1-\frac{2}{\gamma}}}])\overset{\eqref{divJa}}{=}&\,\nabla^a(t^\eta e_0)^bT_{ab}[\frac{\psi}{t^{1-\frac{2}{\gamma}}}]+t^\eta e_0(\frac{\psi}{t^{1-\frac{2}{\gamma}}})\cdot\square_g\frac{\psi}{t^{1-\frac{2}{\gamma}}}\\
\notag=&\;t^\eta K^{ab}T_{ab}[\frac{\psi}{t^{1-\frac{2}{\gamma}}}]-(e_0t^\eta )T_{00}[\frac{\psi}{t^{1-\frac{2}{\gamma}}}]-\frac{2}{t}(1-\frac{2}{\gamma})t^\eta [e_0(\frac{\psi}{t^{1-\frac{2}{\gamma}}})]^2\\
\notag\overset{\eqref{KiiFLRW}}{=}&\,t^{\eta-1}\bigg[(\frac{1}{3\gamma}+\frac{\eta}{2})|\overline{\nabla}\frac{\psi}{t^{1-\frac{2}{\gamma}}}|^2+(\frac{\eta}{2}+\frac{3}{\gamma}-2) [e_0(\frac{\psi}{t^{1-\frac{2}{\gamma}}})]^2
\bigg]
\end{align}
This leads to different choices of $\eta$ depending on the value of $\gamma$, given by
\bea
\lb{stiff}
 \eta &=&4-\frac{6}{\gamma},  \quad \mbox{for} \quad\frac{4}{3}\leq\gamma<2, \\
\lb{soft}
 \eta &=&-\frac{2}{3\gamma}, \quad \mbox{for} \quad \frac{2}{3}<\gamma\leq\frac{4}{3}. 
\eea
We refer to the former as the stiffest region and the latter as the softest region. The case $\gamma=\frac{4}{3}$ corresponds to radiation, where $\eta=-\frac{1}{2}$.
For the two cases, (\ref{divJren}) reads
\begin{align}
\label{divJrenstiff}\nabla^a(J^{t^{4-\frac{6}{\gamma}} e_0}_a[\frac{\psi}{t^{1-\frac{2}{\gamma}}}])=&\,(2-\frac{8}{3\gamma})t^{3-\frac{6}{\gamma}}|\overline{\nabla}\frac{\psi}{t^{1-\frac{2}{\gamma}}}|^2,\\
\label{divJrensoft}\nabla^a(J^{t^{-\frac{2}{3\gamma}} e_0}_a[\frac{\psi}{t^{1-\frac{2}{\gamma}}}])=&\,(\frac{8}{3\gamma}-2)t^{-1-\frac{2}{3\gamma}}[e_0(\frac{\psi}{t^{1-\frac{2}{\gamma}}})]^2.
\end{align}
Then, the energy identity (\ref{divthe2}) for $\frac{\psi}{t^{1-\frac{2}{\gamma}}}$,  $X=t^{4-\frac{6}{\gamma}}e_0$, in the stiffest region $\{\frac{4}{3}\leq\gamma<2 \}$, implies that
\begin{align}\label{StokesJtildeFL}
t^{4-\frac{6}{\gamma}}\int_{\Sigma_t}J^{e_0}_0 [\frac{\psi}{t^{1-\frac{2}{\gamma}}} ]\mathrm{vol}_{\Sigma_t}
&\leq
t^{4-\frac{6}{\gamma}}_0\int_{\Sigma_{t_0}}J^{e_0}_0 [\frac{\psi}{t^{1-\frac{2}{\gamma}}} ]\mathrm{vol}_{\Sigma_{t_0}} 
- (2-\frac{8}{3\gamma} )
\int^{t_0}_t s^{3-\frac{6}{\gamma}}\int_{\Sigma_s} |\overline{\nabla}\frac{\psi}{t^{1-\frac{2}{\gamma}}}|^2\mathrm{vol}_{\Sigma_s} ds \nonumber \\
&\leq t^{4-\frac{6}{\gamma}}_0\int_{\Sigma_{t_0}}J^{e_0}_0 [\frac{\psi}{t^{1-\frac{2}{\gamma}}_0} ]\mathrm{vol}_{\Sigma_{t_0}}. 
\end{align}
Commuting with $\partial_{x}^\alpha$ yields the bound \eqref{renenestFLRW}.
Similarly, we obtain statement \eqref{renenestFLRW_softer} in the softest region $ \{\frac{2}{3}<\gamma\leq\frac{4}{3} \}$. In particular, taking into account the volume form (\ref{volFLRW}), we have the bounds:
\begin{align}
 |\partial_t\frac{\psi}{t^{1-\frac{2}{\gamma}}} |\leq C\|\partial_t\frac{\psi}{t^{1-\frac{2}{\gamma}}}\|_{H^2}\leq \frac{C}{t^{2-\frac{2}{\gamma}}}\bigg(\sum_{|\alpha|\leq 2}t^{4-\frac{6}{\gamma}}_0\int_{\Sigma_{t_0}}J^{e_0}_0 [\partial_x^\alpha\frac{\psi}{t^{1-\frac{2}{\gamma}}_0} ]\mathrm{vol}_{\Sigma_{t_0}}\bigg)^\frac{1}{2},&&\frac{4}{3}\leq\gamma<2 \\
 |\partial_t\frac{\psi}{t^{1-\frac{2}{\gamma}}} |\leq C\|\partial_t\frac{\psi}{t^{1-\frac{2}{\gamma}}}\|_{H^2}\leq \frac{C}{t^{\frac{2}{3\gamma}}}\bigg(\sum_{|\alpha|\leq 2}t^{-\frac{2}{3\gamma}}_0\int_{\Sigma_{t_0}}J^{e_0}_0 [\partial_x^\alpha\frac{\psi}{t^{1-\frac{2}{\gamma}}} ]\mathrm{vol}_{\Sigma_{t_0}}\bigg)^\frac{1}{2},&&\frac{2}{3}<\gamma\leq\frac{4}{3}
\end{align}
which imply that $\partial_t\psi(t,x) \in L^1([0,t_0])$, uniformly in $x$, for all $\frac{2}{3}<\gamma<2$. Thus, $\frac{\psi}{t^{1-\frac{2}{\gamma}}}$ has a limit function $A(x)$, as $t\rightarrow0$. The smoothness of $A(x)$ follows by repeating the preceding argument for $\partial_x^{\alpha}\frac{\psi}{t^{1-\frac{2}{\gamma}}}$. 

Consider now the energy flux of the difference $\psi-A(x) t^{1-\frac{2}{\gamma}}$,
\begin{align}\label{diffenest2FL}
&t^{\frac{2}{\gamma}}\int_{\Sigma_t} |e_0 (\psi-A(x)t^{1-\frac{2}{\gamma}} ) |^2+ |\overline{\nabla} (\psi-A(x)t^{1-\frac{2}{\gamma}} ) |^2\mathrm{vol}_{\Sigma_t}\\
\notag=&\,t^{\frac{2}{\gamma}}\int_{\Sigma_t} |t^{1-\frac{2}{\gamma}}\,e_0\frac{\psi}{t^{1-\frac{2}{\gamma}}}+ (1-\frac{2}{\gamma} )t^{-\frac{2}{\gamma}} (\frac{\psi}{t^{1-\frac{2}{\gamma}}}-A(x) ) |^2+t^{2-\frac{4}{\gamma}} |\overline{\nabla} (\frac{\psi}{t^{1-\frac{2}{\gamma}}}-A(x) ) |^2\mathrm{vol}_{\Sigma_t}\\
\notag\leq&\,2t^{\frac{2}{\gamma}}\int_{\Sigma_t}t^{2-\frac{4}{\gamma}} |e_0\frac{\psi}{t^{1-\frac{2}{\gamma}}} |^2+ (1-\frac{2}{\gamma} )^2 t^{-\frac{4}{\gamma}} |\frac{\psi}{t^{1-\frac{2}{\gamma}}}-A(x) |^2+t^{2-\frac{4}{\gamma}} |\overline{\nabla}\frac{\psi}{t^{1-\frac{2}{\gamma}}} |^2+t^{2-\frac{4}{\gamma}} |\overline{\nabla}A(x) |^2\mathrm{vol}_{\Sigma_t}\\
\notag=& \,4t^2\int_{\Sigma_t} J^{e_0}_0 [\frac{\psi}{t^{1-\frac{2}{\gamma}}} ]\mathrm{vol}_{Euc}+2 (1-\frac{2}{\gamma} )^2\int_{\Sigma_t} |\frac{\psi}{t^{1-\frac{2}{\gamma}}}-A(x) |^2\mathrm{vol}_{Euc}+2t^2\int_{\Sigma_t}|\overline{\nabla}A(x)|^2 \mathrm{vol}_{Euc} \\
\notag&\overset{\eqref{renenestFLRW},\eqref{renenestFLRW_softer} }\leq o(1)+2 (1-\frac{2}{\gamma} )^2\int_{\Sigma_t} |\frac{\psi}{t^{1-\frac{2}{\gamma}}}-A(x) |^2\mathrm{vol}_{Euc}+2t^{2-\frac{4}{3\gamma}}\sum_{i=1}^3\int_{\Sigma_t}|\partial_{x_i}A(x)|^2 \mathrm{vol}_{Euc},
\end{align}
for all $\frac{4}{3}\leq\gamma<2$. The third term in the preceding RHS clearly tends to zero, as $t\rightarrow0$, and by the definition of $A(x)$, so does the second term. Since the above argument also applies to $\partial^\alpha_x[\psi-A(x)t^{1-\frac{2}{\gamma}}]$, this proves \eqref{diffenestFLRW}.
\end{proof}
\begin{remark}
The renormalised estimate (\ref{renenestFLRW}) yields an improved control over the spatial gradient of $\psi$ compared to (\ref{psienestFLRW}). Indeed,  
\begin{align}\label{gradbarpsiestFLRW}
\begin{split}
t^{2-\frac{4}{\gamma}}t^\frac{2}{\gamma}\int_{\Sigma_t}|\overline{\nabla}\psi|^2\mathrm{vol}_{\Sigma_t}\leq t^{4-\frac{6}{\gamma}}_0\int_{\Sigma_{t_0}}J^{e_0}_0 [\frac{\psi}{t^{1-\frac{2}{\gamma}}} ]\mathrm{vol}_{\Sigma_{t_0}},\qquad\frac{4}{3}\leq\gamma<2,\\
 t^{\frac{4}{3\gamma}-2}t^\frac{2}{\gamma}\int_{\Sigma_t}|\overline{\nabla}\psi|^2\mathrm{vol}_{\Sigma_t}\leq t^{-\frac{2}{3\gamma}}_0\int_{\Sigma_{t_0}}J^{e_0}_0 [\frac{\psi}{t^{1-\frac{2}{\gamma}}} ]\mathrm{vol}_{\Sigma_{t_0}},\qquad\frac{2}{3}<\gamma\leq\frac{4}{3},
\end{split}
\end{align}
holds for all $t\in(0,t_0]$, where in the stiffest case $2-\frac{4}{\gamma}<0$, while in the softest $\frac{4}{3\gamma}-2<0$.
\end{remark}
The previous proposition validates the asymptotic profile (\ref{psiasymFLRW}) of $\psi$, as stated in Theorem \ref{Thm:Asym}.
\begin{lemma}\label{lem:L2estFLRW}
The following estimate for the $L^2$ norm of $\partial_{x_i}\psi$ holds:
\begin{align}\label{L2estFLRW}
 \|\partial_{x_i}\psi\|_{L^2(\Sigma_t)}\leq \|\partial_{x_i}\psi\|_{L^2(\Sigma_{t_0})}+\sqrt{2}\frac{ (t^{1-\frac{2}{\gamma}}_0-t^{1-\frac{2}{\gamma}} )}{ 1-\frac{2}{\gamma} } t^{\frac{2}{\gamma}}_0 \bigg(\int_{\Sigma_{t_0}}J^{e_0}_0[\partial_{x_i}\psi]\mathrm{vol}_{Euc} \bigg)^{\frac{1}{2}},
\end{align}
for all $t\in(0,t_0]$.
\end{lemma}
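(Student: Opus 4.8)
The plan is to integrate the time derivative $\partial_t\partial_{x_i}\psi$ against $dt$ and control the integrand using the zeroth-order energy bound already established in Proposition \ref{prop:upperestFLRW}. First, since $\psi$ is smooth, the fundamental theorem of calculus gives, pointwise in $x$,
\[
\partial_{x_i}\psi(t,x)=\partial_{x_i}\psi(t_0,x)-\int_t^{t_0}\partial_s\partial_{x_i}\psi(s,x)\,ds.
\]
Taking the $L^2(\Sigma_t)$ norm (with respect to $\mathrm{vol}_{Euc}$, as fixed by the convention \eqref{Hk}) of both sides and applying Minkowski's integral inequality to move the norm inside the time integral yields
\[
\|\partial_{x_i}\psi\|_{L^2(\Sigma_t)}\leq\|\partial_{x_i}\psi\|_{L^2(\Sigma_{t_0})}+\int_t^{t_0}\|\partial_s\partial_{x_i}\psi\|_{L^2(\Sigma_s)}\,ds,
\]
which already has the structure of the claimed estimate once the integrand is bounded.

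The second step is to control the integrand using the energy inequality \eqref{e0psiest} with the multi-index $\alpha$ corresponding to $\partial_{x_i}$. Since $J^{e_0}_0[\partial_{x_i}\psi]\geq\tfrac12(\partial_t\partial_{x_i}\psi)^2$ pointwise, inequality \eqref{e0psiest} bounds the time derivative directly; converting its right-hand side from $\mathrm{vol}_{\Sigma_{t_0}}$ to $\mathrm{vol}_{Euc}$ by means of \eqref{volFLRW}, namely $\mathrm{vol}_{\Sigma_{t_0}}=t_0^{2/\gamma}\mathrm{vol}_{Euc}$, produces the pointwise-in-$s$ estimate
\[
\|\partial_s\partial_{x_i}\psi\|_{L^2(\Sigma_s)}\leq\sqrt{2}\,s^{-\frac{2}{\gamma}}\,t_0^{\frac{2}{\gamma}}\bigg(\int_{\Sigma_{t_0}}J^{e_0}_0[\partial_{x_i}\psi]\,\mathrm{vol}_{Euc}\bigg)^{\frac12}.
\]

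Finally, I would substitute this bound into the time integral and evaluate the elementary primitive $\int_t^{t_0}s^{-2/\gamma}\,ds=\frac{t_0^{1-2/\gamma}-t^{1-2/\gamma}}{1-2/\gamma}$, which is positive because $1-\tfrac{2}{\gamma}<0$ for $\gamma<2$; collecting the constants reproduces exactly \eqref{L2estFLRW}. There is no genuine obstacle here: the argument is mechanical, and the only points requiring care are the consistent bookkeeping of the powers of $t$ generated when passing between $\mathrm{vol}_{\Sigma_t}$ and $\mathrm{vol}_{Euc}$, and the observation that the resulting time integral, though finite for every $t>0$, diverges as $t\to0$ (reflecting the anticipated blow-up), so that no uniform-in-$t$ bound is available or sought at this stage.
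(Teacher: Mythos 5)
Your proposal is correct and is essentially the paper's own argument in integral form: the paper differentiates $\|\partial_{x_i}\psi\|_{L^2(\Sigma_t)}$ in time and uses Cauchy--Schwarz, whereas you apply the fundamental theorem of calculus plus Minkowski's integral inequality, which is the same estimate; both then invoke the commuted energy bound \eqref{e0psiest} (with the volume-form conversion \eqref{volFLRW}) and evaluate the same primitive $\int_t^{t_0}s^{-2/\gamma}\,ds$. Your bookkeeping of the $\mathrm{vol}_{Euc}$ convention and of the sign of $1-\tfrac{2}{\gamma}$ is accurate, so no gap remains.
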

\begin{proof}
Differentiating in $e_0$ we have:
\begin{align}\label{L2est}
 \frac{1}{2}e_0\|\partial_{x_i}\psi\|^2_{L^2(\Sigma_t)}\leq&\, \|\partial_{x_i}\psi\|_{L^2(\Sigma_t)}\|e_0\psi\|_{L^2(\Sigma_t)}\\
\tag{using (\ref{psienestFLRW})}\leq&\, \frac{1}{t^{\frac{2}{\gamma}}}\bigg(2t^{\frac{2}{\gamma}}_0\int_{\Sigma_{t_0}}J^{e_0}_0[\partial_{x_i}\psi]\mathrm{vol}_{\Sigma_{t_0}} \bigg)^{\frac{1}{2}}\|\partial_{x_i}\psi\|_{L^2(\Sigma_t)}
\end{align}
or
\begin{align}\label{L2est2}
e_0 \|\partial_{x_i}\psi\|_{L^2(\Sigma_t)}\leq \sqrt{2} \frac{t^{\frac{2}{\gamma}}_0}{t^{\frac{2}{\gamma}}}\bigg(\int_{\Sigma_{t_0}}J^{e_0}_0[\partial_{x_i}\psi]\mathrm{vol}_{Euc} \bigg)^{\frac{1}{2}}.
\end{align}
Integrating the above on $[t,t_0]$ gives (\ref{L2estFLRW}) for $\gamma<2$.
\end{proof}
\begin{remark}
The bounds that we have proven so far, stated in Propositions \ref{prop:upperestFLRW}, \ref{prop:renestFLRW} and Lemma \ref{lem:L2estFLRW}, are also valid if we replace the integral domains $\Sigma_t,\Sigma_{t_0}$ by $U_t,U_{t_0}$. This can be easily seen from the fact that in the corresponding energy identity (\ref{divthe}), the null boundary terms have a favourable sign for an upper bound and therefore can be dropped.
\end{remark}
Now we may proceed to derive the blow up criterion given in Theorem \ref{Thm:BlowUP}. First, notice that 
for $\gamma>\frac{2}{3}$, the main contribution of the energy flux generated by $J^{e_0}[\psi]$ comes from the $e_0\psi$ term. Indeed, by (\ref{gradbarpsiestFLRW}) it follows that
\begin{align}\label{varphienbehFL}
t^{\frac{2}{\gamma}}\int_{\Sigma_t}J^{e_0}_0[\psi]\mathrm{vol}_{\Sigma_t}
=&\,\frac{1}{2}\int_{\Sigma_t}t^{\frac{4}{\gamma}}(\partial_t\psi)^2\mathrm{vol}_{Euc}+O(t^\eta)
\end{align}
where $\eta=\frac{4}{\gamma}-2>0$, for $\gamma\in[\frac{4}{3},2)$ and $\eta=2-\frac{4}{3\gamma}>0$, for $\gamma\in(\frac{2}{3},\frac{4}{3}]$.
Hence, taking the limit $t\rightarrow0$ in the preceding identity and utilizing \eqref{diffenestFLRW} leads to
\begin{align}\label{Jlim}
\lim_{t\rightarrow0}t^{\frac{2}{\gamma}}\int_{\Sigma_t}J^{e_0}_0[\psi]\mathrm{vol}_{\Sigma_t}
=\frac{1}{2} (1-\frac{2}{\gamma} )^2\int_{\Sigma_0}A^2(x)\mathrm{vol}_{Euc}.
\end{align}
Combining \eqref{Stokes1FLRW}, \eqref{Jlim} we derive:
\begin{align}\label{lowboundFLRW}
&\frac{1}{2} (1-\frac{2}{\gamma} )^2\int_{\Sigma_0}A^2(x)\mathrm{vol}_{Euc} \\
\notag=&\,t^{\frac{2}{\gamma}}_0\int_{\Sigma_{t_0}}J^{e_0}_0[\psi]\mathrm{vol}_{\Sigma_{t_0}}-\frac{4}{3\gamma}\int^{t_0}_0s^{\frac{2}{\gamma}-1}\int_{\Sigma_s}|\overline{\nabla}\psi|^2 \mathrm{vol}_{\Sigma_s} ds \\
\tag{by \eqref{L2estFLRW}  }\ge&\,\frac{1}{2}t^{\frac{4}{\gamma}}_0\|\partial_t\psi\|_{L^2(\Sigma_{t_0})}^2+\frac{1}{2}t^{\frac{4}{\gamma}-\frac{4}{3\gamma}}_0\sum_{i=1}^3\|\partial_{x_i}\psi\|^2_{L^2(\Sigma_{t_0})}
- \frac{8}{3\gamma}\int^{t_0}_0 s^{\frac{4}{\gamma}-1-\frac{4}{3\gamma}} ds \sum^3_{i=1}\|\partial_{x_i}\psi\|^2_{L^2(\Sigma_{t_0})} \\
\notag&-\frac{16}{3\gamma}\int^{t_0}_0 s^{\frac{4}{\gamma}-1-\frac{4}{3\gamma}} \frac{ (t^{1-\frac{2}{\gamma}}_0-s^{1-\frac{2}{\gamma}} )^2}{ (1-\frac{2}{\gamma} )^2} ds \sum^{3}_{i=1}t^{\frac{2}{\gamma}}_0\int_{\Sigma_{t_0}}J^{e_0}_0[\partial_{x_i}\psi]\mathrm{vol}_{\Sigma_{t_0}}  \\
\notag =&\, \frac{1}{2}t^{\frac{4}{\gamma}}_0\|\partial_t\psi \|^2_{L^2(\Sigma_{t_0})}-\frac{1}{2}t^{\frac{8}{3\gamma}}_0\sum^3_{i=3}\|\partial_{x_i}\psi\|^2_{L^2(\Sigma_{t_0})}  \\
\notag&-\frac{t_0^{2-\frac{4}{3\gamma}}}{1-(\frac{2}{3\gamma})^2}\sum_{i=1}^3\bigg[t_0^\frac{4}{\gamma}\|\partial_t\partial_{x_i}\psi\|^2_{L^2(\Sigma_{t_0})}+t_0^\frac{8}{3\gamma}\sum_{j=1}^3\|\partial_{x_j}\partial_{x_i}\psi\|^2_{L^2(\Sigma_{t_0})}\bigg].
\end{align}
Now is evident now that if the assumptions of Theorem \ref{Thm:BlowUP} for FLRW are satisfied, then $\|A(x)\|_{L^2(\mathbb{T}^3)}>0$.

To prove Theorem \ref{Thm:BlowUP_local} for FLRW, we use the local energy identity (\ref{divthe}) and plug in (\ref{n}), (\ref{volFLRW}), (\ref{divJt2/gammae0}):
\begin{align}\label{Stokes2FLRW}
t^{\frac{2}{\gamma}}\int_{U_t}J^{e_0}_0[\psi]\mathrm{vol}_{U_t}= &\,\frac{1}{2}t^\frac{4}{\gamma}_0\|\partial_t\psi\|^2_{L^2(U_{t_0})}+\frac{1}{2}t^{\frac{4}{\gamma}-\frac{4}{3\gamma}}_0\sum_{i=1}^3\|\partial_{x_i}\psi\|^2_{L^2(U_{t_0})}\\
\notag&-\frac{4}{3\gamma}\int^{t_0}_t s^{\frac{4}{\gamma}-1-\frac{4}{3\gamma}}\sum_{i=1}^3\|\partial_{x_i}\psi\|^2_{L^2(U_s)}ds\\
\tag{$i<j;\,i,j\neq l$}&-\sum_{l=1}^3\int^{t_0}_t\int_{\mathcal{N}_l^\pm\cap \overline{U}_s} s^{\frac{2}{\gamma}+\frac{4}{3\gamma}}\frac{1}{2}\big[|(e_0\pm e_l)\psi|^2+|e_i\psi|^2+|e_j\psi|^2\big]dsdx_idx_j.
\end{align}
Since $t^{-\frac{2}{3\gamma}}dt=\pm dx_l$ along $\mathcal{N}_l^\pm$, it follows by integrating that the closure $\overline{U}_t$ of the neighbourhood $U_t$ is the cube $I^3$, where $I=[-\frac{t^{1-\frac{2}{3\gamma}}}{1-\frac{2}{3\gamma}},\delta+\frac{t^{1-\frac{2}{3\gamma}}}{1-\frac{2}{3\gamma}}]$. We make use of the following 1-dimensional Sobolev inequality\footnote{Proof by fundamental theorem of calculus: $f^2(t,x_l)\leq \min_{x_l\in I}f^2(t,x_l)+\int_I2|f||\partial_{x_l}f|dx_l\leq |I|^{-1}\|f\|_{L^2(I)}^2+\|f\|_{H^1(I)}$.}
\begin{equation}\label{Sob1D}
f^2(t,x_l)\leq (\delta +\frac{2}{1-\frac{2}{3\gamma}}t^{1-\frac{2}{3\gamma}})^{-1}\int_{I}f^2(t,x_l) dx_l+ \|f(t,x_l) \|^2_{H^1(I)},\qquad f\in H^1(I),\;\;t\in(0,t_0].
\end{equation}
First, we take the limit $t\rightarrow0$ in (\ref{Stokes2FLRW}), employing (\ref{Jlim}),
and then we apply (\ref{L2estFLRW}) to the integrand in the third line of (\ref{Stokes2FLRW}) and (\ref{Sob1D}) to the integral over $\mathcal{N}_l^\pm\cap \overline{U}_s$ in the last line of (\ref{Stokes2FLRW}) to deduce the lower bound:
\begin{align}\label{Stokes3FLRW}
&\frac{1}{2}(1-\frac{2}{\gamma})^2\int_{U_0}A^2(x)\mathrm{vol}_{U_0}\\
\notag\ge &\,\frac{1}{2}t^\frac{4}{\gamma}_0\|\partial_t\psi\|^2_{L^2(U_{t_0})}+\frac{1}{2}t^{\frac{4}{\gamma}-\frac{4}{3\gamma}}_0\sum_{i=1}^3\|\partial_{x_i}\psi\|^2_{L^2(U_{t_0})}
- \frac{8}{3\gamma}\int^{t_0}_0 s^{\frac{4}{\gamma}-1-\frac{4}{3\gamma}} ds \sum^3_{i=1}\|\partial_{x_i}\psi\|^2_{L^2(U_{t_0})} \\
\notag&-\frac{16}{3\gamma}\int^{t_0}_0 s^{\frac{4}{\gamma}-1-\frac{4}{3\gamma}} \frac{ (t^{1-\frac{2}{\gamma}}_0-s^{1-\frac{2}{\gamma}} )^2}{ (1-\frac{2}{\gamma} )^2} ds \sum^{3}_{i=1}t^{\frac{2}{\gamma}}_0\int_{U_{t_0}}J^{e_0}_0[\partial_{x_i}\psi]\mathrm{vol}_{U_{t_0}} \\
\notag&-\int^{t_0}_0(1+\frac{ 1}{ \delta+\frac{2}{1-\frac{2}{3\gamma}}s^{1-\frac{2}{3\gamma}} })s^{-\frac{2}{3\gamma}}\int_{U_s}s^\frac{4}{\gamma}12J^{e_0}_0[\psi] \mathrm{vol}_{Euc}ds\\
\notag&-\int^{t_0}_0s^{-\frac{2}{3\gamma}}\int_{U_s} 4s^\frac{4}{\gamma}\sum_{l=1}^3J^{e_0}_0[\partial_{x_l}\psi] \mathrm{vol}_{Euc}\\
\notag\ge&\,\frac{1}{2}t^\frac{4}{\gamma}_0\|\partial_t\psi\|^2_{L^2(U_{t_0})}-\frac{1}{2}t^{\frac{8}{3\gamma}}_0\sum_{i=1}^3\|\partial_{x_i}\psi\|^2_{L^2(U_{t_0})}\\
\notag&-\frac{t_0^{2-\frac{4}{3\gamma}}}{1-(\frac{2}{3\gamma})^2}\sum_{i=1}^3\bigg[t_0^\frac{4}{\gamma}\|\partial_t\partial_{x_i}\psi\|^2_{L^2(\Sigma_{t_0})}+t_0^\frac{8}{3\gamma}\sum_{j=1}^3\|\partial_{x_j}\partial_{x_i}\psi\|_{L^2(\Sigma_{t_0})}^2\bigg]\\
\tag{by (\ref{psienestFLRW}) for $\{U_t\}$}&-\int^{t_0}_0(1+\frac{ 1}{ \delta+\frac{2}{1-\frac{2}{3\gamma}}s^{1-\frac{2}{3\gamma}} })s^{-\frac{2}{3\gamma}}ds
\int_{U_{t_0}}12t_0^\frac{4}{\gamma}J^{e_0}_0[\psi] \mathrm{vol}_{Euc}\\
\notag&-\int^{t_0}_0s^{-\frac{2}{3\gamma}}ds\int_{U_{t_0}} 4t_0^\frac{4}{\gamma}\sum_{l=1}^3J^{e_0}_0[\partial_{x_l}\psi] \mathrm{vol}_{Euc}\\
\notag=&\,\frac{1}{2}t^\frac{4}{\gamma}_0\|\partial_t\psi\|^2_{L^2(U_{t_0})}-\frac{1}{2}t^{\frac{8}{3\gamma}}_0\sum_{i=1}^3\|\partial_{x_i}\psi\|^2_{L^2(U_{t_0})}\\
\notag&-\frac{t_0^{2-\frac{4}{3\gamma}}}{1-(\frac{2}{3\gamma})^2}\sum_{i=1}^3\bigg[t_0^\frac{4}{\gamma}\|\partial_t\partial_{x_i}\psi\|^2_{L^2(\Sigma_{t_0})}+t_0^\frac{8}{3\gamma}\sum_{j=1}^3\|\partial_{x_j}\partial_{x_i}\psi\|_{L^2(\Sigma_{t_0})}^2\bigg]\\
\notag&-\frac{t_0^{1-\frac{2}{3\gamma}}}{1-\frac{2}{3\gamma}}\bigg[
6t^\frac{4}{\gamma}_0\|\partial_t\psi\|^2_{L^2(U_{t_0})}+6t^{\frac{8}{3\gamma}}_0\sum_{i=1}^3\|\partial_{x_i}\psi\|^2_{L^2(U_{t_0})}\\
\notag&+2t^\frac{4}{\gamma}_0\sum_{l=1}^3\|\partial_t\partial_{x_l}\psi\|^2_{L^2(U_{t_0})}+2t^{\frac{8}{3\gamma}}_0\sum_{i,l=1}^3\|\partial_{x_i}\partial_{x_l}\psi\|^2_{L^2(U_{t_0})}
\bigg]\\
\notag&-3\log\big(1+\frac{2}{1-\frac{2}{3\gamma}}\frac{t^{1-\frac{2}{3\gamma}}_0}{\delta}\big)\big[
t^\frac{4}{\gamma}_0\|\partial_t\psi\|^2_{L^2(U_{t_0})}+t^{\frac{8}{3\gamma}}_0\sum_{i=1}^3\|\partial_{x_i}\psi\|^2_{L^2(U_{t_0})}
\big].
\end{align}
Thus, if the assumptions of Theorem \ref{Thm:BlowUP_local} for FLRW are satisfied, then the RHS of (\ref{Stokes3FLRW}) gives $\|A(x)\|_{L^2(U_0)}>0$. This completes the proofs of the main theorems for FLRW.

\subsection{Kasner}

For Kasner the adapted orthonormal frame to the constant $t$ hypersurfaces reads:
\begin{align}\label{frame}
e_0=-\partial_t,\qquad e_i=t^{-p_i}\partial_{x_i}.
\end{align}
In this frame, the non-zero components of the second fundamental form $K$ of the $\Sigma_t$ hypersurfaces are
\begin{align}\label{KasnerKii}
K_{ii}:=g(\nabla_{e_i}e_0,e_i)=-\frac{p_i}{t},&&i=1,2,3.
\end{align}
Further, the intrinsic volume form $\mathrm{vol}_{\Sigma_t}$ on $\Sigma_t$ equals
\begin{align}\label{Kasnerdmu}
\mathrm{vol}_{\Sigma_t}=t \mathrm{vol}_{Euc}.
\end{align}
\begin{proposition}[Upper bound]\label{prop:upperestKasner}
Let $\psi$ be a smooth solution to the wave equation, $\square_g\psi=0$, in Kasner. Then the following energy inequality holds:
\begin{align}\label{psienestKasner}
t\int_{\Sigma_t}J^{e_0}_0[\partial^{\alpha}_x\psi]\mathrm{vol}_{\Sigma_t} \leq t_0\int_{\Sigma_{t_0}}J^{e_0}_0[\partial^{\alpha}_x\psi]\mathrm{vol}_{\Sigma_{t_0}}
\end{align}
for all $t\in(0,t_0]$ and any multi-index $\alpha$. Moreover, $\partial^\beta_x\psi$ satisfies the pointwise bound
\begin{align}\label{LinftypsiKasner}
|\partial^\beta_x\psi(t,x)|\leq C\sum_{|\alpha|\leq 2}\bigg(t_0^2\int_{\Sigma_{t_0}}J^{e_0}_0[\partial_{x}^{\alpha}\partial^\beta_x\psi]\mathrm{vol}_{\Sigma_{t_0}}\bigg)^\frac{1}{2}\log\frac{t_0}{t}+|\partial^\beta_x\psi(t_0,x)|,
\end{align}
for any multi-index $\beta$, where $C$ is a constant independent of $t_0,p_i$.
\end{proposition}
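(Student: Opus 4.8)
The plan is to run, in the Kasner setting, exactly the argument of Proposition~\ref{prop:upperestFLRW}, with the one change that the multiplier $t^{\frac{2}{\gamma}}e_0$ is replaced by the vector field dictated by the Kasner volume form \eqref{Kasnerdmu}, namely $X=t\,e_0$. The entire content of \eqref{psienestKasner} is the sign of the bulk term, so I would first compute $\nabla^a J^{t e_0}_a[\psi]$. Using \eqref{divJa} together with $\square_g\psi=0$, the fact that $e_0=-\partial_t$ is geodesic for \eqref{Kasnermetric} (so no acceleration term appears), and the second fundamental form \eqref{KasnerKii}, the divergence splits as in \eqref{divJt2/gammae0} into $t\,K^{ab}T_{ab}[\psi]-(e_0 t)\,T_{00}[\psi]$. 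Inserting $K_{ii}=-p_i/t$, $e_0 t=-1$, the orthonormal-frame expressions for $T_{ii},T_{00}$, and the Kasner relation $\sum_i p_i=1$, I expect the cross terms to cancel and leave
\begin{align*}
\nabla^a J^{t e_0}_a[\psi]=-\sum_{i=1}^3 p_i\,(e_i\psi)^2+|\overline{\nabla}\psi|^2=\sum_{i=1}^3(1-p_i)\,(e_i\psi)^2 .
\end{align*}
As a consistency check, the same general formula reproduces the FLRW divergence \eqref{divJt2/gammae0}.

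The decisive point is that this expression is nonnegative, and here it is the hypothesis $p_j<1$ in \eqref{Kasnermetric}---rather than any stiffest/softest dichotomy---that makes every coefficient $1-p_i$ positive. Feeding this into the energy identity \eqref{divthe2} for $X=t\,e_0$, the bulk integral has a favourable sign and is dropped, giving \eqref{psienestKasner} for $\psi$; since $[\square_g,\partial_{x_i}]=0$ the estimate immediately upgrades to arbitrary $\partial_x^\alpha\psi$. For the pointwise bound I would isolate the $\partial_t$ contribution: combining \eqref{psienestKasner} (applied to $\partial_x^\alpha\partial_x^\beta\psi$) with $\mathrm{vol}_{\Sigma_t}=t\,\mathrm{vol}_{Euc}$ yields $\tfrac{t^2}{2}\|\partial_t\partial_x^\alpha\partial_x^\beta\psi\|^2_{L^2}\le t_0\int_{\Sigma_{t_0}}J^{e_0}_0[\partial_x^\alpha\partial_x^\beta\psi]\,\mathrm{vol}_{\Sigma_{t_0}}$, hence $\|\partial_s\partial_x^\beta\psi\|_{H^2}\le C s^{-1}\big(\sum_{|\alpha|\le 2}t_0\int_{\Sigma_{t_0}}J^{e_0}_0[\partial_x^\alpha\partial_x^\beta\psi]\,\mathrm{vol}_{\Sigma_{t_0}}\big)^{1/2}$. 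Writing $\partial_x^\beta\psi(t,x)=\partial_x^\beta\psi(t_0,x)-\int_t^{t_0}\partial_s\partial_x^\beta\psi\,ds$ and invoking $H^2(\mathbb{T}^3)\hookrightarrow L^\infty(\mathbb{T}^3)$ inside the time integral reduces everything to $\int_t^{t_0}\frac{ds}{s}=\log\frac{t_0}{t}$, which supplies the logarithmic factor of \eqref{LinftypsiKasner}, the remaining power of $t_0$ being read off from $\mathrm{vol}_{\Sigma_{t_0}}$.

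The structural feature worth emphasising is that the time weight here is exactly $t$---forced by $\sum_i p_i=1$---whereas in FLRW it is the super-critical $t^{\frac{2}{\gamma}}$ with $\frac{2}{\gamma}>1$; this criticality is precisely what turns the polynomial rate into a logarithmic one and matches the profile \eqref{psiasymKasner}. I do not expect a serious obstacle: the only genuinely non-mechanical step is the divergence computation and the verification of its sign, after which everything is the FLRW bookkeeping with $t^{\frac{2}{\gamma}}$ replaced by $t$, the sole extra care being the anisotropic weights in $|\overline{\nabla}\psi|^2=\sum_i t^{-2p_i}(\partial_{x_i}\psi)^2$.
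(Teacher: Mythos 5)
Your proposal is correct and follows essentially the same route as the paper: the multiplier $X=t\,e_0$, the divergence computation $\nabla^a J^{te_0}_a[\psi]=\sum_{i=1}^3(1-p_i)(e_i\psi)^2$ whose sign (from $p_i<1$) lets the bulk term in \eqref{divthe2} be dropped, commutation via $[\square_g,\partial_{x_i}]=0$, and then the fundamental theorem of calculus in $t$ combined with $H^2(\mathbb{T}^3)\hookrightarrow L^\infty(\mathbb{T}^3)$ and $\int_t^{t_0}s^{-1}\,ds=\log\frac{t_0}{t}$ for the pointwise bound. No gaps; your consistency check against the FLRW divergence and the remark on the critical weight $t$ are correct additions, not deviations.
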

\begin{proof}
We compute the divergence of the current $J^{e_0}_a[\psi]$:
\begin{align}\label{divJvarphi}
\notag\nabla^aJ^{te_0}_a[\psi]\overset{(\ref{divJa})}{=}&\,\nabla^a(te_0)^bT_{ab}[\psi]
=tK^{ab}T_{ab}[\psi]-(e_0t)T_{00}[\psi]\\
=&\sum_{i=1}^3K_{ii}t(e_i\psi)^2-\frac{1}{2}{K^i}_it|\nabla\psi|^2+\frac{1}{2}\big[(e_0\psi)^2+|\overline{\nabla}\psi|^2\big]\\
\notag\overset{(\ref{KasnerKii})}{=}&\,\sum_{i=1}^3(1-p_i)(e_i\psi)^2.
\end{align}
Hence, by (\ref{divthe2}), for $X=te_0$, we have
\begin{align}\label{Stokesvarphi}
t\int_{\Sigma_t}J^{e_0}_0[\psi]\mathrm{vol}_{\Sigma_t} =&\,t_0\int_{\Sigma_{t_0}}J^{e_0}_0[\psi]\mathrm{vol}_{\Sigma_{t_0}}+\int^{t_0}_t\int_{\Sigma_s} \sum_{i=1}^3 (p_i-1 )(e_i\psi)^2 )\mathrm{vol}_{\Sigma_s} ds\\
\tag{$p_i\leq1$}\leq&\,t_0\int_{\Sigma_{t_0}}J^{e_0}_0[\psi]\mathrm{vol}_{\Sigma_{t_0}}.
\end{align}
Note, that the same inequality holds for $\partial^\beta_x\psi$ by commuting the wave equation with $\partial^\beta_x$.
In particular, taking into account the volume form, we control
\begin{align}\label{dtvarphiest}
t^2\int_{\Sigma_t}(\partial_t\psi)^2\mathrm{vol}_{Euc}\leq 2t_0\int_{\Sigma_{t_0}}J^{e_0}_0[\psi]\mathrm{vol}_{\Sigma_{t_0}},&&t^2\|\partial_t\partial^\beta_x\psi\|_{H^2}^2\leq\sum_{|\alpha|\leq2} 2t_0\int_{\Sigma_{t_0}}J^{e_0}_0[\partial_{x}^\alpha\partial_x^\beta\psi]\mathrm{vol}_{\Sigma_{t_0}},
\end{align}
for all $t\in(0,t_0]$.

Using the fundamental theorem of calculus along $e_0$ and Sobolev embedding $H^2(\mathbb{T}^3)\hookrightarrow L^\infty(\mathbb{T}^3)$, we then derive
\begin{align}\label{Linftyvarphi}
|\partial^\beta_x\psi(t,x)|=&\, |\int^t_{t_0}\partial_s\partial^\beta_x\psi ds+\psi(t_0,x) |\\
\notag\leq&\int^{t_0}_tC\|\partial_s\partial^\beta_x\psi\|_{H^2(\mathrm{vol}_{Euc})}ds+|\partial^\beta_x\psi(t_0,x)|\\
\tag{by (\ref{dtvarphiest})}\leq&\,C\sum_{|\alpha|\leq 2}\bigg(t_0^2\int_{\Sigma_{t_0}}J^{e_0}_0[\partial_{x}^{\alpha}\partial^\beta_x\psi]\mathrm{vol}_{Euc}\bigg)^\frac{1}{2}\log\frac{t_0}{t}+|\partial^\beta_x\psi(t_0,x)|,
\end{align}
for all $t\in(0,t_0]$.
\end{proof}
Instead of deriving renormalised energy estimates, as in the previous subsection for FLRW (see Proposition \ref{prop:renestFLRW}), we prove the validity of the expansion (\ref{psiasymKasner}) by using (\ref{LinftypsiKasner}) to view the wave equation as an inhomogeneous ODE in $t$. This procedure is more wasteful in the number of derivatives of $\psi$ that we need to bound from initial data, but it is slightly simpler. 
\begin{proof}[Proof of Theorem \ref{Thm:Asym} for Kasner]
We express the wave equation for $\psi$ in terms of the $(t,x_1,x_2,x_3)$ coordinate system and treat the spatial derivatives of $\psi$ as error terms:
\begin{align}\label{boxpsiexp}
-\partial_t^2\psi-\frac{1}{t}\partial_t\psi+\sum_{i=1}^3t^{-2p_i}\partial_{x_i}^2\psi=0\qquad \Rightarrow\qquad \partial_t(t\partial_t\psi)=\sum_{i=1}^3t^{1-2p_i}\partial_{x_i}^2\psi.
\end{align}
Integrating in $[t,t_0]$ we obtain the formula
\begin{align}\label{boxpsiexp2}
\notag t\partial_t\psi=&\,t_0\partial_t\psi_0-\int^{t_0}_t\sum_{i=1}^3s^{1-2p_i}\partial_{x_i}^2\psi ds\\
\psi(t,x)=&\,\psi(t_0,x)+t_0\partial_t\psi_0\log\frac{t}{t_0}+\int^{t_0}_t\frac{1}{s}\int^{t_0}_s\sum_{i=1}^3\overline{s}^{1-2p_i}\partial_{x_i}^2\psi d\overline{s}ds\\
\notag=&\,\psi(t_0,x)+\bigg(t_0\partial_t\psi_0+\int^{t_0}_0\sum_{i=1}^3s^{1-2p_i}\partial_{x_i}^2\psi ds\bigg)\log\frac{t}{t_0}
+\int^{t_0}_t\frac{1}{s}\int^s_0\sum_{i=1}^3\overline{s}^{1-2p_i}\partial_{x_i}^2\psi d\overline{s} ds\\
\notag=&\,A(x)\log t+u(t,x),
\end{align}
where 
\begin{align}
\label{AKasner}A(x)=&\,t_0\partial_t\psi_0+\int^{t_0}_0\sum_{i=1}^3s^{1-2p_i}\partial_{x_i}^2\psi ds,\\
\label{uKasner}u(t,x)=&\,\psi(t_0,x)-\bigg(t_0\partial_t\psi_0+\int^{t_0}_0\sum_{i=1}^3s^{1-2p_i}\partial_{x_i}^2\psi ds\bigg)\log t_0+\int^{t_0}_t\frac{1}{s}\int^s_0\sum_{i=1}^3\overline{s}^{1-2p_i}\partial_{x_i}^2\psi d\overline{s} ds.
\end{align}
Note, that since by the assumption $1-2p_i>-1$ and by (\ref{LinftypsiKasner}) $\|\partial_x^\beta\psi\|_{L^\infty}\leq C|\log t|$, $t\in(0,t_0]$, the functions $s^{1-2p_i}\partial_{x_i}^2\psi$, $\frac{1}{s}\int^s_0\sum_{i=1}^3\overline{s}^{1-2p_i}\partial_{x_i}^2\psi d\overline{s}$ are integrable\footnote{By Proposition \ref{prop:upperestKasner}, their $L^1([0,t_0])$ norm is bounded by initial data.} in $[0,t_0]$ and hence the above formulas make sense. Moreover, it is implied by (\ref{AKasner}), (\ref{uKasner}) that $A(x),u(t,x)$ are smooth functions and $u=u_{Kasner}$ and its spatial derivatives are in fact uniformly bounded up to $t=0$:
\begin{align}\label{uasym}
\|\partial_x^\alpha u\|_{L^\infty(\Sigma_t)}\leq C,
\end{align}
for all $t\in(0,t_0]$, any multi-index $\alpha$, where $C>0$ is a constant depending on initial data.
\end{proof}
Next, we prove the blow up result stated in Theorem \ref{Thm:BlowUP} for Kasner. Notice that since $p_i<1$, the main contribution of the energy flux generated by the current  $J^{e_0}[\psi]$, as $t\rightarrow0$, comes from the $\partial_t\psi$ term:
\begin{align}\label{varphienbeh}
t\int_{\Sigma_t}J^{e_0}[\psi]\mathrm{vol}_{\Sigma_s}=&\,\frac{1}{2}\int_{\Sigma_t}t^2(\partial_t\psi)^2+\sum_{i=1}^3t^{2-2p_i}(\partial_{x_i}\psi)^2\mathrm{vol}_{Euc}\\
\notag=&\, \frac{1}{2}\int_{\Sigma_t}t^2(\partial_t\psi)^2\mathrm{vol}_{Euc}+\sum_{i=1}^3t^{2-2p_i}O(|\log t|^2).
\end{align}
Utilising (\ref{psiasymKasner}), (\ref{uasym}) it follows that
\begin{align}\label{varphienlim}
\lim_{t\rightarrow0}t\int_{\Sigma_t}J^{e_0}[\psi]\mathrm{vol}_{\Sigma_s}= \frac{1}{2}\lim_{t\rightarrow0}\int_{\Sigma_t}t^2(\partial_t\psi)^2\mathrm{vol}_{Euc}=\frac{1}{2}\int_{\Sigma_0}A^2(x)\mathrm{vol}_{Euc}.
\end{align}
Thus, returning to (\ref{Stokesvarphi}) and taking the limit $t\rightarrow0$ we obtain the identity:
\begin{align}\label{L2Aid}
\int_{\Sigma_0}A^2(x)\mathrm{vol}_{Euc}=&\,t_0\int_{\Sigma_{t_0}}J^{e_0}_0[\psi]\mathrm{vol}_{\Sigma_{t_0}}+\int^{t_0}_0\int_{\Sigma_s}\sum_{i=1}^3(p_i-1)(e_i\psi)^2\mathrm{vol}_{\Sigma_s} ds \\
\notag=&\,\frac{1}{2}t_0^2\|\partial_t\psi\|^2_{L^2(\Sigma_{t_0})}+\frac{1}{2}\sum_{i=1}^3t_0^{2-2p_i}\|\partial_{x_i}\psi\|_{L^2(\Sigma_{t_0})}^2\\
\notag&+\int^{t_0}_0\sum_{i=1}^3(p_i-1)s^{1-2p_i}\int_{\Sigma_s}(\partial_{x_i}\psi)^2\mathrm{vol}_{Euc} ds.
\end{align}
We bound the $L^2$ norm of $\partial_{x_i}\psi$ as follows:
\begin{lemma}\label{lem:L2estFLRW}
	The following estimate for the $L^2$ norm of $\partial_{x_i}\psi$ holds:
	\begin{align}\label{L2estFLRW}
	\|\partial_{x_i}\psi\|_{L^2(\Sigma_t)}\leq \|\partial_{x_i}\psi\|_{L^2(\Sigma_{t_0})}+\bigg(2t_0\int_{\Sigma_{t_0}}J^{e_0}_0[\partial_{x_i}\psi]\mathrm{vol}_{\Sigma_{t_0}}\bigg)^\frac{1}{2}\log\frac{t_0}{t},
	\end{align}
	for all $t\in(0,t_0]$.
\end{lemma}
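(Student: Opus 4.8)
The plan is to reproduce verbatim the structure of the FLRW version of this Lemma proved earlier in the paper, the only change being the time weights, which now yield a logarithm instead of a power. Throughout, $\|\cdot\|_{L^2(\Sigma_t)}$ is the non-intrinsic norm against $\mathrm{vol}_{Euc}$ defined in \eqref{Hk}, and I use repeatedly that $e_0=-\partial_t$ commutes with $\partial_{x_i}$, since the $\partial_{x_i}$ are Killing.

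First I would differentiate the squared norm along $e_0$. Using $e_0\partial_{x_i}\psi=\partial_{x_i}(e_0\psi)$ and Cauchy--Schwarz,
\[
\tfrac{1}{2}e_0\|\partial_{x_i}\psi\|^2_{L^2(\Sigma_t)}=\int_{\Sigma_t}\partial_{x_i}\psi\, e_0\partial_{x_i}\psi\,\mathrm{vol}_{Euc}\leq \|\partial_{x_i}\psi\|_{L^2(\Sigma_t)}\,\|e_0\partial_{x_i}\psi\|_{L^2(\Sigma_t)},
\]
whence, dividing through, $e_0\|\partial_{x_i}\psi\|_{L^2(\Sigma_t)}\leq \|e_0\partial_{x_i}\psi\|_{L^2(\Sigma_t)}$. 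Next I would bound the right-hand side by applying the first-order energy control \eqref{dtvarphiest} to $\partial_{x_i}\psi$ in place of $\psi$, which gives
\[
t^2\|e_0\partial_{x_i}\psi\|^2_{L^2(\Sigma_t)}=t^2\|\partial_t\partial_{x_i}\psi\|^2_{L^2(\Sigma_t)}\leq 2t_0\int_{\Sigma_{t_0}}J^{e_0}_0[\partial_{x_i}\psi]\mathrm{vol}_{\Sigma_{t_0}},
\]
so that $e_0\|\partial_{x_i}\psi\|_{L^2(\Sigma_t)}\leq \tfrac{1}{t}\big(2t_0\int_{\Sigma_{t_0}}J^{e_0}_0[\partial_{x_i}\psi]\mathrm{vol}_{\Sigma_{t_0}}\big)^{1/2}$.

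Finally, recalling $e_0=-\partial_t$ and integrating this differential inequality over $[t,t_0]$, the left-hand side telescopes to $\|\partial_{x_i}\psi\|_{L^2(\Sigma_t)}-\|\partial_{x_i}\psi\|_{L^2(\Sigma_{t_0})}$, while the right-hand side contributes $\int_t^{t_0}\tfrac{ds}{s}=\log\tfrac{t_0}{t}$, giving exactly \eqref{L2estFLRW}.

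There is no essential obstacle here; the argument is a direct one-dimensional integration once the energy estimate \eqref{psienestKasner}--\eqref{dtvarphiest} is in hand. The only point worth flagging is the contrast with FLRW: there the analogous time weight is a negative power $s^{\frac{2}{\gamma}}_0/s^{\frac{2}{\gamma}}$ whose integral stays polynomially bounded, whereas in Kasner the weight is exactly $s^{-1}$, which is only logarithmically integrable, and this is precisely the source of the $\log(t_0/t)$ growth and, ultimately, of the logarithmic blow-up rate announced in Theorem \ref{Thm:Asym}. A minor technical care is the justification of differentiating the norm, which is licit since $\psi$ is smooth for $t>0$, and the possible vanishing of $\|\partial_{x_i}\psi\|_{L^2(\Sigma_t)}$, which is harmless because the differential inequality for the norm is deduced from the one for its square.
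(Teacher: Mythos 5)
Your proof is correct and follows essentially the same route as the paper: Cauchy--Schwarz on the $t$-derivative of the squared norm, division to get a differential inequality for the norm itself, the commuted energy bound \eqref{dtvarphiest} applied to $\partial_{x_i}\psi$, and integration of the $s^{-1}$ weight over $[t,t_0]$ to produce the logarithm. The technical points you flag (differentiability of the norm, possible vanishing when dividing) are handled implicitly in the paper and your treatment of them is fine.
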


\begin{proof}
We have
\begin{align}\label{L2logbound}
&\notag-\frac{1}{2}\partial_t \|\partial_{x_i}\psi\|^2_{L^2(\Sigma_t)}\overset{C-S}{\leq} \|\partial_{x_i}\psi\|_{L^2(\Sigma_t)}\|\partial_t\partial_{x_i}\psi\|_{L^2(\Sigma_t)}\\
& \notag-\partial_t \|\partial_{x_i}\psi\|_{L^2(\Sigma_t)}\leq \|\partial_t\partial_{x_i}\psi\|_{L^2(\Sigma_t)}\\
& \notag \|\partial_{x_i}\psi\|_{L^2(\Sigma_t)}\leq \|\partial_{x_i}\psi\|_{L^2(\Sigma_{t_0})}+\int^{t_0}_t \|\partial_s\partial_{x_i}\psi\|_{L^2(\Sigma_s)}ds\\
&\|\partial_{x_i}\psi\|_{L^2(\Sigma_t)}\leq \|\partial_{x_i}\psi\|_{L^2(\Sigma_{t_0})}
+\bigg(2t_0\int_{\Sigma_{t_0}}J^{e_0}_0[\partial_{x_i}\psi]\mathrm{vol}_{\Sigma_{t_0}}\bigg)^\frac{1}{2}\log\frac{t_0}{t},\qquad\text{by (\ref{dtvarphiest})}.
\end{align}	
\end{proof}

Applying (\ref{L2logbound}) to (\ref{L2Aid}) we derive:
\begin{align}\label{L2Alowbound}
&\int_{\Sigma_0}A^2(x)\mathrm{vol}_{Euc}\\
\notag\ge&\,
\frac{1}{2}t_0^2\|\partial_t\psi\|^2_{L^2(\Sigma_{t_0})}+\frac{1}{2}\sum_{i=1}^3t_0^{2-2p_i}\|\partial_{x_i}\psi\|_{L^2(\Sigma_{t_0})}^2
+\sum_{i=1}^3\int^{t_0}_0(p_i-1)s^{1-2p_i}ds(2\|\partial_{x_i}\psi\|_{L^2(\Sigma_{t_0})}^2)\\
\notag&+\sum_{i=1}^3\int^{t_0}_0 (p_i-1)s^{1-2p_i}|\log\frac{s}{t_0}|^2 ds \bigg(2t_0^2\|\partial_t\partial_{x_i}\psi\|_{L^2(\Sigma_{t_0})}+2\sum_{j=1}^3t_0^{2-2p_i}\|\partial_{x_j}\partial_{x_i}\psi\|_{L^2(\Sigma_{t_0})}\bigg)\\
\notag\ge&\,\frac{1}{2}t_0^2\|\partial_t\psi\|^2_{L^2(\Sigma_{t_0})}-\frac{1}{2}\sum_{i=1}^3t_0^{2-2p_i}\|\partial_{x_i}\psi\|_{L^2(\Sigma_{t_0})}^2\\
\notag
&-\sum_{i=1}^3\frac{t_0^{2-2p_i}}{(1-p_i)^2}\bigg[
t_0^2\|\partial_t\partial_{x_i}\psi\|^2_{L^2(\Sigma_{t_0})}+\sum_{j=1}^3t_0^{2-2p_i}\|\partial_{x_j}\partial_{x_i}\psi\|_{L^2(\Sigma_{t_0})}^2\bigg].
\end{align}
If the assumptions of Theorem \ref{Thm:BlowUP} for Kasner are satisfied, then it is clear from the preceding lower bound that $\|A(x)\|_{L^2(\mathbb{T}^3)}>0$.

To prove the local version of the blow up criterion in Theorem \ref{Thm:BlowUP_local}, we argue similarly, but also take into account the contribution of the null flux terms in (\ref{divthe2}). Note that the upper bounds (\ref{psienestKasner}), (\ref{LinftypsiKasner}), (\ref{L2logbound}) are also valid for the integral domains $U_t,U_0$ in place of $\Sigma_t,\Sigma_{t_0}$, since the null flux terms in (\ref{divthe2}) have a favourable sign for an upper bound. Hence, taking the limit $t\rightarrow0$ in (\ref{divthe}) and employing (\ref{n}), (\ref{divJvarphi}), (\ref{psiasymKasner}), (\ref{uasym}) we obtain:
\begin{align}\label{L2Alowboundlocal}
&\int_{U_0}A^2(x)\mathrm{vol}_{Euc}\\
\notag\ge&\,
\frac{1}{2}t_0^2\|\partial_t\psi\|^2_{L^2(U_{t_0})}+\frac{1}{2}\sum_{i=1}^3t_0^{2-2p_i}\|\partial_{x_i}\psi\|_{L^2(U_{t_0})}^2
+\int^{t_0}_0\sum_{i=1}^3(p_i-1)s^{1-2p_i}\int_{U_s}(\partial_{x_i}\psi)^2\mathrm{vol}_{Euc} ds\\
\tag{$i<j;\,i,j\neq l$}&-\sum_{l=1}^3\int^{t_0}_t\int_{\mathcal{N}_l^\pm\cap \overline{U}_s} s^{2-p_l}\frac{1}{2}\big[|(e_0\pm e_l)\psi|^2+|e_i\psi|^2+|e_j\psi|^2\big]dsdx_idx_j.
\end{align}
Since $t^{-p_l}dt=\pm dx_l$ along $\mathcal{N}_l^\pm$, it follows by integrating that the closure $\overline{U}_t$ of the neighbourhood of $U_t$ is the product $I_1\times I_2\times I_3$, where $I_i=[-\frac{t^{1-p_i}}{1-p_i},\delta+\frac{t^{1-p_i}}{1-p_i}]$. The analogous inequality to (\ref{Sob1D}) then reads
\begin{equation}\label{Sob1DKasner}
f^2(t,x_l)\leq (\delta +\frac{2}{1-p_l}t^{1-p_l})^{-1}\int_{I_l}f^2(t,x_l) dx_l+ \|f(t,x_l) \|^2_{H^1(I_l)},\qquad f\in H^1(I_l),\;\;t\in(0,t_0].
\end{equation}
Applying the latter bound to the integral over $\mathcal{N}_l^\pm\cap \overline{U}_s$ on the RHS of (\ref{L2Alowboundlocal}), along with (\ref{L2logbound}), we derive
\begin{align}\label{L2Alowboundlocal2}
&\int_{U_0}A^2(x)\mathrm{vol}_{Euc}\\
\notag\ge&\,
\frac{1}{2}t_0^2\|\partial_t\psi\|^2_{L^2(U_{t_0})}+\frac{1}{2}\sum_{i=1}^3t_0^{2-2p_i}\|\partial_{x_i}\psi\|_{L^2(U_{t_0})}^2
+\sum_{i=1}^3\int^{t_0}_0(p_i-1)s^{1-2p_i}ds(2\|\partial_{x_i}\psi\|_{L^2(U_{t_0})}^2)\\
\notag&+\sum_{i=1}^3\int^{t_0}_0 (p_i-1)s^{1-2p_i}|\log\frac{s}{t_0}|^2 ds \bigg(2t_0^2\|\partial_t\partial_{x_i}\psi\|_{L^2(U_{t_0})}+2\sum_{j=1}^3t_0^{2-2p_i}\|\partial_{x_j}\partial_{x_i}\psi\|_{L^2(U_{t_0})}\bigg)\\
\notag&-\sum_{l=1}^3\bigg[\int^{t_0}_t(1+\frac{1}{\delta +\frac{2}{1-p_l}s^{1-p_l}})s^{-p_l}\int_{U_s}4sJ^{e_0}_0[\psi]\mathrm{vol}_{U_s}ds
+\int^{t_0}_ts^{-p_l}\int_{U_s}4sJ^{e_0}_0[\partial_{x_l}\psi]\mathrm{vol}_{U_s}ds\bigg]\\
\notag\ge&\,\frac{1}{2}t_0^2\|\partial_t\psi\|^2_{L^2(U_{t_0})}-\frac{1}{2}\sum_{i=1}^3t_0^{2-2p_i}\|\partial_{x_i}\psi\|_{L^2(U_{t_0})}^2\\
\notag&-\sum_{i=1}^3\frac{t_0^{2-2p_i}}{(1-p_i)^2}\bigg[
t_0^2\|\partial_t\partial_{x_i}\psi\|^2_{L^2(\Sigma_{t_0})}+\sum_{j=1}^3t_0^{2-2p_i}\|\partial_j\partial_{x_i}\psi\|_{L^2(\Sigma_{t_0})}^2\bigg]\\
\notag&-\sum_{l=1}^3\frac{2t_0^{1-p_l}}{1-p_l}\bigg[
t^2_0\|\partial_t\psi\|^2_{L^2(U_{t_0})}+\sum_{i=1}^3t^{2-2p_i}_0\|\partial_{x_i}\psi\|^2_{L^2(U_{t_0})}\\
\notag&+t^2_0\sum_{l=1}^3\|\partial_t\partial_{x_l}\psi\|^2_{L^2(U_{t_0})}+\sum_{i=1}^3t^{2-2p_i}_0\|\partial_{x_i}\partial_{x_l}\psi\|^2_{L^2(U_{t_0})}
\bigg]\\
\notag&-\sum_{l=1}^3\log\big(1+\frac{2}{1-p_l}\frac{t^{1-p_l}_0}{\delta}\big)\big[
t^2_0\|\partial_t\psi\|^2_{L^2(U_{t_0})}+t^{2-2p_l}_0\sum_{i=1}^3\|\partial_{x_i}\psi\|^2_{L^2(U_{t_0})}
\big].
\end{align}
Thus, given the assumptions in Theorem \ref{Thm:BlowUP_local} for Kasner, it follows that $\|A(x)\|_{L^2(U_0)}>0$, as required.

\section*{Acknowledgements}
The authors would like to thank Mihalis Dafermos for useful interactions. Further, A.A. and A.F. benefited from discussions with Jos\'e Nat\'ario.
This work was partially supported by FCT/Portugal through UID/MAT/04459/2013, grant (GPSEinstein) PTDC/MAT-ANA/1275/2014 and through the FCT fellowships SFRH/BPD/115959/2016 (A.F.) and SFRH/BPD/85194/2012 (A.A.). G.F. is supported by the EPSRC grant EP/K00865X/1 on ‘Singularities of Geometric Partial Differential Equations’.


\begin{thebibliography}{99}

\bibitem{AR} Allen, P.~ T.~ and Rendall, A.~ D.~ (2010). {Asymptotics of linearized cosmological perturbations.} {\em Journal of Hyperbolic Differential Equations {\bf7}, no.~ 2, p.~ 255–-277}.

\bibitem{ABIF13} 
Ames, E., Beyer, F., Isenberg, J.~  and LeFloch, P.~ G.~ (2013). 
Quasilinear Symmetric Hyperbolic Fuchsian Systems
in Several Space Dimensions.
{\em Contemporary Mathematics {\bf 591}.} 

\bibitem{AndRend} Anderson, L.~ and Rendall, A.~ D.~ (2001). {Quiescient cosmological singularities.} {\em Commun.~
Math.~ Phys.~ {\bf 218}, 479--511,} {\em arXiv:gr-qc/0001047.}

\bibitem{Uggla} Andersson L., van Elst H., Lim W.~C.~ and Uggla C.~ (2005). Asymptotic silence of generic cosmological singularities. {\em Phys.~ Rev.~ Lett.~ {\bf 94} (5):051101.}


\bibitem{BKL70} Belinski\u{i}, V.~ A., Khalatnikov, I.~M.~  and  Lifshitz, E.~M.~ (1970). {Oscillatory approach to the singular point in relativistic cosmology.} {\em Adv.~ Phys.~ {\bf 13}, no.~ 6.}

\bibitem{Brehm} Brehm, B.~ (2017).
{Bianchi VIII and IX vacuum cosmologies: Almost every
solution forms particle horizons and converges to the
Mixmaster attractor.} {Freie Universit\"at, Berlin}.


\bibitem{Burko} Burko, L.~M.~ (1998). Singularity in supercritical collapse of a spherical scalar field. {\em Phys.~ Rev.~ D~ {\bf 58}, 084013.}

\bibitem{BIM04} Choquet-Bruhat,  Y.,  Isenberg, J.~  and Moncrief, V.~ (2004). Topologically general $U(1)$ symmetric vacuum space-times with AVTD behavior. {\em Nuovo Cimento, 119B, 625-638.}


\bibitem{christo_sing} Christodoulou, D.~ (1999). On the global initial value problem and the issue of singularities. {\em Class.~ Quantum Grav.~ {\bf 16}, A23-A35.}

\bibitem{joao_ich} Costa, J.~ L. and Franzen, A.~ T.,  (2017). {Bounded energy waves on the black hole interior of Reissner--Nordstr\"om-de Sitter.} {\em Annales of Henri Poincar\'e {\bf 18}, 10, pp.~ 3371–-3398}, {\em arXiv:1607.01018}.


\bibitem{m_lec} Dafermos, M.~ and Rodnianski, I.~ (2013). Lectures on black holes and linear waves. {\em Clay Mathematics Proceedings, Amer.~ Math.~ Soc.~ {\bf 17}, 97-205.} {\em arxiv:gr-qc/0811.0354}.

\bibitem{DHRW02} Damour, T., Henneaux, M., Rendall, A.~ and Weaver, M.~ (2002). Kasner-like behaviour for subcritical Einstein-matter systems. {\em Annales of Henri Poincar\'e {\bf 3}, 1049-1111.}


\bibitem{DorNov} Doroshkevich, G.~ A.~ and Novikov, I.~ D.~ (1978). Space-time and physical fields inside a black hole. {\em Zh.~ Eksp.~ Teor.~ Fiz.~ {\bf 74}, 3-12.}

\bibitem{ELS72} Eardley, D., Liang, E.~ and Sachs, R.~ (1972). Velocity-dominated singularities in irrotational dust cosmologies. {\em J.~ Math.~ Phys.~ {\bf 13}, 99.}

\bibitem{FourSbier} Fournodavlos, G.~ and Sbierski, J.~ (2018). Generic blow-up results for the wave equation in the interior of a Schwarzschild black hole. {\em arXiv:1804.01941.}

\bibitem{anne} Franzen, A.~ T~ (2016). {Boundedness of massless scalar waves on Reissner--Nordstr\"om interior backgrounds.} {\em Comm.~ of Math.~ Phys.~{\bf 343}, 2,} {\em arxiv:gr-qc/1407.7093.}

\bibitem{anne2} Franzen, A.~ T.~ (2018). {Boundedness of massless scalar waves on Kerr interior backgrounds.} 

\bibitem{dejan1} Gajic, D.~ (2016). {Linear waves in the interior of extremal black holes I.} {\em Comm.~ Math.~ Phys.~{\bf 353}, 2, 717-770}, {\em arXiv:1509.06568}.	
	
\bibitem{dejan2} Gajic, D.~ (2015). {Linear waves in the interior of extremal black holes II.} {\em arXiv:1512.08953}.

\bibitem{HS11} Heinzle, J.~ M.~ and Sandin, P.~ (2011). 
The initial singularity of ultrastiff perfect fluid spacetimes without symmetries. {\em Comm.~ of Math.~ Phys.~{\bf 313}, 2, 385-403.} 

\bibitem{peter1} Hintz, P.~ (2015). {Boundedness and decay of scalar waves at the Cauchy horizon of the Kerr spacetime.} {\em arXiv:1512.08003.}

\bibitem{peter2} Hintz, P.~ and Vasy, A.~ (2015). {Analysis of linear waves near the Cauchy horizon of cosmological black holes.} {\em arXiv:1512.08004.}



\bibitem{IM02} Isenberg, J.~ and Moncrief, V.~(2002). 
Asymptotic behavior in polarized and half-polarized
$U(1)$-symmetric vacuum spacetimes. {\em Class.~ Qu.~ Grav.~ {\bf 19}, 21, 5361-5386.}



\bibitem{LP18} Petersen, O.~ (2018). {Wave equations with initial data on compact Cauchy horizons.} {\em  arXiv:1802.10057 [math.AP]}

\bibitem{peterson} Petersen, O.~ (2016). {The mode solution of the wave equation in Kasner spacetimes and redshift.} {\em Math.~ Phys.~ Anal.~ Geom.~ {\bf 19}: 26,} {\em arXiv:1503.02411.}


\bibitem{rendall} Rendall, A.~ D.~ (2000). {Blow-up for solutions of hyperbolic PDE and spacetime singularities.} {\em Proceedings of Journ\'ees \'Equations aux d\'eriv\'ees partielles,} {\em arXiv:gr-qc/0006060}.

\bibitem{Rin05} Ringstr\"om, H.~(2005). {Existence of an asymptotic velocity and implications for the asymptotic behavior in the direction of the singularity in $T^3$‐-Gowdy.} {\em Commun.~ Pure Appl.~ Math.~ {\bf 59}, 7, 977-1041.}

\bibitem{Rin17} Ringstr\"om, H.~(2017). {Linear systems of wave equations on cosmological backgrounds with convergent asymptotics.} {\em arXiv:1707.02803 [gr-qc].}


\bibitem{RS1} Rodninanski, I.~ and Speck, J.~ (2018). A regime of linear stability for the Einstein-scalar Field system with applications to nonlinear Big Bang formation. {\em Annals of Mathematics
	{\bf 187}, 65--156}.

\bibitem{RS2} Rodninanski, I.~ and Speck, J.~ (2014). {Stable Big Bang Formation in Near-FLRW Solutions to the Einstein-Scalar Field and Einstein-Stiff Fluid Systems.} {\em arXiv:1407.6298.}


\bibitem{Speck} Speck, J.~ (2017). {The maximal development of near-FLRW data for the Einstein-scalar field system with spatial topology $\mathbb{S}^3$.} {\em arXiv:1709.06477 [math.AP].}



\end{thebibliography}
\end{document}